\numberwithin{equation}{section}
\newtheorem{theorem}{Theorem}[section]
\newtheorem{lemma}{Lemma}[section]
\theoremstyle{definition}
\newtheorem{corollary}{Corollary}[section]
\newtheorem{definition}{Definition}[section]
\newtheorem{remark}{Remark}[section]
\newtheorem{example}{Example}[section]
\newtheorem*{assumption*}{Assumption}
\newtheorem{assumption}{Assumption}[section]
\definecolor{ForestGreen}{rgb}{.13,.54,.13}
\definecolor{BrickRed}{rgb}{.70,0,0}
\newcommand{\fed}[1]{{\color{red}{(\textbf{Fedor:} #1)}}}
\newcommand{\ed}[1]{{\color{BrickRed} {#1}}}
\newcommand{\new}[1]{{{\color{ForestGreen} {#1}}}}
\newcommand{\fed}[1]{}
\newcommand{\ed}[1]{#1}
\newcommand{\new}[1]{#1}
\newcommand{\rann}[1]{\textcolor{red}{(\textbf{Rann:} #1)}}
\newcommand{\rann}[1]{}
\newcommand{\E}{\mathbb{E}}
\renewcommand{\P}{\mathbb{P}}
\title{On social networks that support learning\thanks{We are grateful (in alphabetic order) to Herve Moulin, Alexander Nesterov, Matthew  Jackson, Nicolas Vieille, and Omer Tamuz for inspirational discussions and suggestions. We also thank attendees of the game-theory seminar at the Technion and the Conference on Mechanism and Institution Design 2020 for their feedback. \new{We are grateful to Michael Borns for proofreading the paper.}  \newline
		%
		 Arieli's research is supported by the Ministry of Science and Technology grant \#2028255.\newline
		 Sandomirskiy's research is partially supported by the Lady Davis Foundation, 
		Grant 19-01-00762 of the Russian Foundation for Basic Research,  the European Research Council (ERC) under the European Union's Horizon 2020 research and innovation program ( \#740435), and the Basic Research Program of the National Research University Higher School of Economics. 		\newline
	Smorodinsky's research is supported by the United States-Israel Binational Science Foundation, National Science Foundation grant \#2016734, German-Israel Foundation grant I-1419-118.4/2017, Ministry of Science and Technology grant \#19400214, Technion VPR grants, and the Bernard M. Gordon Center for Systems Engineering at the Technion.
}}
\author{Itai Arieli \footnote{Technion IE\&M, Haifa (Israel).} \and Fedor Sandomirskiy $^\dagger{\;}$\footnote{Higher School of Economics, St.~Petersburg (Russia).} \and Rann Smorodinsky $^\dagger$}
\date{}
\begin{document}

\maketitle

\ifdefined\DRAFT
\begin{center}
	\color{red}{\textbf{A DRAFT. PLEASE, DON'T DISTRIBUTE}}
\end{center}
\fi

\begin{abstract}
	
It is well understood that the structure of a social network is critical to whether or not
agents can aggregate information correctly. In this paper, we study social networks that
support information aggregation when rational agents act sequentially and irrevocably.
Whether or not information is aggregated depends, inter alia, on the order in which
agents decide. Thus, to decouple the order and the topology, our model studies a random
arrival order.

\new{Unlike the case of a fixed arrival order, in our model the decision of an agent is unlikely to be affected by those who are far from him in the network. This observation allows us to identify a \emph{local learning requirement}, a natural condition on the agent's neighborhood that guarantees that this agent  makes the
correct decision (with high probability) no matter how well other agents perform. Roughly speaking, the agent should belong to a multitude
of mutually exclusive social circles.} 

\new{We illustrate the power of the local learning requirement by constructing a family of social networks that guarantee information aggregation despite that} no agent is a social hub (in
other words, there are no opinion leaders). \new{Although, the common wisdom of the social learning literature suggests that information aggregation is very fragile, another application of the local learning requirement demonstrates the existence of networks where learning prevails even if a substantial  
 fraction of the agents are not involved in the learning process.}
On a technical level, the networks we construct rely on the theory of expander graphs, i.e., highly connected sparse graphs
with a wide range of applications from pure mathematics to error-correcting
codes.

\end{abstract}

\section{Introduction}\label{sec_intro}
The way ideas and information propagate in society is critical for engineering political campaigns, evaluating new technologies, marketing products, and introducing  new social conventions. It is well known that whether or not the information is properly aggregated is highly dependent on the quality of information accessible to the individual agents, on the way information is transmitted from one agent to another, the order and frequency with which agents take actions, and, finally, on the topology of the underlying social network.

In this paper, we study the role that the topology of the social network plays in information aggregation among rational agents. We do so using a variant of the herding model of \citet{banerjee1992simple} and \citet{bikhchandani1992} adapted to a social network setting. Agents decide sequentially on a binary action based on a private (bounded) signal and history of actions taken by their predecessors. In the variant we study, the set of predecessors an agent observes is restricted \new{by his set of neighbors in an exogenously given social network.}

Since agents act sequentially a network may properly aggregate information for a particular sequence while failing to do so for most other sequences. \ed{To decouple the network's topology from the order in which agents take their actions, we make a distinction between the \emph{a-priori network}, which is exogenously given, and the \emph{realized network}, which is induced by the aforementioned social network and the ordering of agents.
Such a distinction between the social network and realized observability structure is not standard.}
Our goal in this paper is to find a sufficient condition \ed{on the a-priori network that} guarantees learning for most sequences. We say that such a network {\em supports learning}. Our first step in the analysis, which is of independent interest, is to focus on a particular agent in the \ed{a-priori} network and characterize features of the {\bf local} structure of the network that guarantee that this agent will make the correct decision (with high probability). \new{Indeed, we show that an agent that 
belongs to a variety of  social circles that, among themselves, are mutually exclusive and socially distant has a high probability of taking the correct action.} 
We refer to this condition as the \ed{\emph{local learning requirement.}} Therefore, social networks with the property that each agent satisfies the local learning requirement will aggregate information. We show, with an example, that this condition is not vacuous. Quite surprisingly, we demonstrate the existence of {\bf symmetric} networks with the aforementioned property. 

In the sociology literature, and in particular the literature on mass communication, it is often argued that learning is facilitated vis-\`{a}-vis \new{a small number of opinion leaders predetermined by their position in the social network}~\citep{katz1955personal}. Thus, the existence of  symmetric networks that support learning is quite counterintuitive as it implies that learning obtains without opinion leaders \new{(see the discussion in Section \ref{sec_discuss}).}

When studying learning in social networks it may be quite unrealistic to assume that 
every decision problem pertains to all agents in the society. In fact, in large societies (such as those present in online social networks) it is far more realistic that only a minority of the agents have a stake in an arbitrary decision problem. For example, a dilemma between two candidates for mayorship may interest one subset of the population while a dilemma between two competing ``green'' technologies (e.g., hybrid engines vs. electric engines) may be relevant to some other small set of agents. Thus, the proper requirement \ed{on a social network} is that information be properly aggregated when only a fraction of agents participate. We say that a network  {\em supports robust learning} if any subnetwork containing a constant fraction of the agents supports learning.\footnote{An alternative, weaker definition could require that most such subnetworks support learning. We discuss this \new{alternative definition} in Appendix~\ref{app_randomized_robustness}.} 
To obtain robust learning it is sufficient that in any network induced by a fraction of the agents most of those agents satisfy the local learning requirement. We show, with an example, that this condition is not vacuous. Once again, the example we have satisfies symmetry assumptions on the agents.


To construct social networks that support (robust) learning we tap into the theory of expander graphs and in particular  we make use of well-known properties of a family of graphs known as {\em Ramanujan} expanders.

\subsection{Related literature}
\new{The literature on information aggregation studies many aspects of social learning. In our literature review, we focus on the connection between the topology of the social network  and information aggregation. This connection} has received attention in the literature on repeated interaction in social networks, \new{where, in contrast  to our model, agents can revise their decision as many times as they want.}
For example, \citet{mossel2015strategic}  study Bayesian learning by rational agents and provide a sufficient condition on the network structure such that asymptotically agents select the optimal action with
a probability that approaches one as the network grows. \citet{golub2010naive} analyze conditions under which a ``naive'' updating process converges to a
fully rational limiting belief, which equals the Bayesian posterior conditional on all agents’ information, in a large society. The conditions require no agent to be too influential.

In the context of sequential social \new{learning, where each agent takes an action only once,} it was first suggested by  \citet{smith1991essays} that learning can be obtained by restricting the social connections among agents. In particular, Smith points out
that, for a given sequence of agents, if the deciding agents initially do not observe each other then they collectively form a sample of independent observations (\new{sometimes referred to as} the ``guinea pigs''). Thus, any subsequent agent observing them is likely to make the correct decision. \new{\citet{sgroi2002} discusses how to choose  the set of guinea pigs optimally.  Unfortunately, these insights couple} the network structure with the order in which agents make their decision and is, therefore, moot to the questions we pose.

\ed{The interplay between the network structure and sequential learning is studied by~\citet{Acemoglu} in a model with  rational agents and sequential actions.} \new{In their model, prior to making his decision, an agent observes a random sample of predecessors, and the paper focuses on sufficient conditions on this sampling distribution that guarantee learning. The network} structure \new{turns out to be} coupled with the order in which agents make their decisions. Although
the \new{sampling model of~\citet{Acemoglu}} induces a random realized network, this network cannot be derived from and so is not associated  with an exogenously given \ed{a-priori} network.\footnote{
	Unlike the independence of the random observation sampling assumed by \citet{Acemoglu}, in our model the random order generates a high correlation between the observation {sets} of individuals. Indeed, if an individual observes only a small subset of his friends when he makes his decision he may infer that those friends are early arrivals and, therefore, based their own decisions on very limited information. This is not the case in the random sample model.}
 Thus, our approach complements that of \citet{Acemoglu} in the sense that their model better suits settings where the network structure is generated ad hoc  whereas our model fits social networks whose {a-priori} structure is fixed with no connection to the order in which agents decide in one problem or another and so the same network underlies a variety of decision problems, each with its own order.
\citet{Arieli2016} also \ed{consider} random sampling but assume the edges in the {network} are limited to the $m$-dimensional lattice.

\cite{bahar2019celeb} are the first to study social learning over an exogenously given network with a random arrival order of the agents. They demonstrate the existence of a social network where learning is guaranteed. Their network has a particular structure. It is a bipartite graph with a minority of agents on one side (whom they refer to as ``celebrities'') and a majority on the other side (``commoners''); \ed{for more details see Example~\ref{ex_celebrity}.} {Although the family of  celebrity graphs supports learning, they are sensitive to the participation of agents. In other words, in decision problems that are \ed{irrelevant} to a minority of the agents, information need not be properly aggregated and social learning may fail; i.e., \ed{the celebrity graphs} do not support robust learning}.

Recently, various papers  have demonstrated how fragile social learning can be (see, e.g., \citet{bohren2016informational}; \cite{frick2019misinterpreting}; \citet{mueller2018manipulating}). In contrast to these findings we {construct} a network structure that is robust in the sense that learning prevails even if a majority of agents, chosen adversarially, does not participate.

On a more technical level, the {networks}  we construct rely on the theory of expander graphs, i.e., highly connected sparse graphs with a wide range of applications from solving problems in pure mathematics to designing error-correcting codes (see~\citet{lubotzky2012expander} for a survey). Recently, this theory was  applied to problems of social learning by~\citet{mossel2014majority} and~\citet{feldman2014reaching} in the context of boundedly rational agents who take actions repeatedly.

\subsection{Structure of the paper}  
{Section~\ref{sec_model} contains the description of the model. In Section~\ref{sec_local_structure}, we demonstrate that whether or not an agent learns is determined by the local structure of the a-priori network around him and derive the local learning requirement on this neighborhood, which guarantees that the agent learns the state.
Sections~\ref{sec_egalitarian} and Section~\ref{sec_robust} are devoted to implications of the local learning requirement.  Section~\ref{sec_egalitarian} shows that learning is possible in egalitarian societies: there are symmetric networks that support learning.  Section~\ref{sec_robust} strengthens this result by constructing symmetric networks where learning is robust to adversarial elimination of groups of agents. 
Section~\ref{sec_discuss} discusses alternative interpretations and extensions of the basic model.}

{The results of Sections~\ref{sec_egalitarian} and~\ref{sec_robust}
heavily rely on the insights from the theory of expander graphs, which are explained in Appendix~\ref{sect_expanders}. Technical proofs for Sections~\ref{sec_local_structure} and~\ref{sec_robust} are in Appendices~\ref{app_localization}, \ref{appendix_local_requirement}, and~\ref{appendix_missed_proofs_robust}. Appendix~\ref{app_randomized_robustness} discusses an alternative, weaker notion of robustness, where groups of agents are eliminated at random.}

\section{The model}\label{sec_model}

A social network is an undirected graph $G=(V,E)$, where $V$ is the set of agents and an edge $vu$ is contained in $E$ if $v$ and $u$ are ``friends.'' {Friendship is always mutual, as on Facebook: $vu\in E\Rightarrow uv\in E$.} 
We denote by $F_v$ the set of $v$'s friends $\{u\in V\, : \, vu\in E \}$. 

With each agent $v\in V$, we associate his arrival time $t_v$; the arrival times {$T=(t_v)_{v\in V}$} are independent random variables uniformly distributed on\footnote{Random arrival order allows us to disentangle the topology {of the a-priori network} and the order in which agents make their decisions. Randomness is also justified by the fact that this order usually differs from one issue to another (e.g, iPhone vs. Android, public kindergartens vs. private ones). \ed{We stress} that \ed{in our model} {\em arrival} refers to the time \ed{when} the agent makes a decision, and not to the time when \ed{he} joins the network \ed{as in the random sampling model of}~\cite{Acemoglu}.} $[0,1]$.
The collection of arrival times $T$ induces the orientation on edges of $G=(V,E)$ from late arrivals to early ones and converts it into the direct network $G_T=(V,E_T)$ with $E_T=\{vu\in E\,:\, t_u<t_v\}$, {which we call \emph{the realized network}.} \ed{We will refer to $G$ as the \emph{a-priori network} in order to distinguish it from $G_T$.}
{The directed edge $vu$ \ed{in the realized network} means that agent $v$ gets to observe agent $u$; i.e., $v$ observes his friends who arrived earlier. We denote the set of all such friends of $v$ by $F_{v,T}=\{u\in V\, :  \, vu \in E_T\}\subset F_v$.}

Upon arrival, each agent $v$ takes an action $a_v\in\{0,1\}$, depending on the information available. The agent gets a payoff of one if $a_v=\theta$, where $\theta$ is the random state, a Bernoulli random variable with success probability $1/2$; {if the action does not match the state, the agent gets zero payoff.}  Nobody observes $\theta$ but every agent receives a binary signal $s_v\in\{0,1\}$ that equals $\theta$ with probability {$p$} and $1-\theta$ with probability $1-p$, where {$\frac{1}{2}<p\leq 1$.} Signals are independent conditional on $\theta$. In addition to his own signal $s_v$, each agent $v$ observes the set of his friends who arrived earlier, $F_{v,T}$, and their actions, $(a_u)_{u\in F_{v,T}}$. We denote the information set of an agent $v$ by $I_v=(s_v, F_{v,T}, (a_u)_{u\in F_{v,T}})$. {Note that an agent  knows neither his arrival time nor the set of agents who arrived before him (except for his friends).}

All agents are rational and risk-neutral; the description of the model, probability distributions, and the {a-priori network} $G$ are common knowledge. {By contrast, the realized network is not observed by agents.}


\subsection{Equilibria}\label{subsubsect_equilibria}
A mixed strategy $\sigma_v$ of an agent $v$  maps his information set $I_v$  to the probability distribution on $\{0,1\}$, according to which his action $a_v$ is then chosen.  {The goal of each agent is to maximize his expected payoff, which coincides with the probability of taking the action that matches the state.} 

{We consider an equilibrium $\sigma=(\sigma_v)_{v\in V}$ of the induced Bayesian game and use $\P_\sigma$ and $\E_\sigma$ for the probability and for the expectation with respect to all the randomness in the problem (the state, signals, arrival times, and actions). An equilibrium exists since it is a finite game; however, it  may be non-unique. {We omit the subscript $\sigma$ and write $\P$ and $\E$ when this creates no confusion.}}


{We say that an equilibrium is \emph{state-symmetric} if the distribution $\P$ is invariant under the mapping $\big(\theta,\, (s_v)_{v\in V}, \, (a_v)_{v\in V}\big)\to\big(1-\theta,\, (1-s_v)_{v\in V}, \, (1-a_v)_{v\in V}\big)$, i.e., under simultaneous flipping of the state, signals, and actions. Since the payoffs enjoy this symmetry, a state-symmetric  equilibrium exists.}

{In any equilibrium, an agent $v$ selects $a_v=1$ if $\P(\theta=1\mid I_v)> \frac{1}{2}$, i.e., if conditionally on the information, the state~$\theta=1$ is more likely. Similarly, $a_v=0$ if $\P(\theta=1\mid I_v)< \frac{1}{2}$. It may look as if equilibria can only differ in tie-breaking; however, it is not the whole truth since $\P(\theta=1\mid I_v)$ depends on the equilibrium strategies of other agents, which, in turn, are optimal replies to the strategies of others, including $v$.
{In particular,} an equilibrium lacks the sequential structure since no pair of agents $v,u\in V$ such that $vu\notin E$  know which of them acted first.}


\subsection{Learning}
{The following definition quantifies how well an agent and the a-priori network itself aggregate information.}
\begin{definition}[Learning quality]
{For an {a-priori} network $G=(V,E)$ and an equilibrium $\sigma$,
the \emph{learning quality}  of an agent $v\in V$ is the probability that this agent takes the correct action:
$$l_{\sigma}(v)=\P_\sigma (a_v=\theta).$$
The \emph{learning quality} of the network $G$ is the expected number  of agents taking the correct action: 
$$L_{\sigma}(G)=\frac{1}{|V|}\cdot \E_\sigma \big|\{v\in V\,:\, a_v=\theta \}\big|=\frac{1}{|V|}\sum_{v\in V} l_{\sigma}(v).
$$
We say that an agent with $l_{\sigma}(v)$ close to one \emph{learns the state} and the network with $L_{\sigma}(G)$ close to one \emph{supports learning}. The following example shows that networks may fail to support learning due to the herding phenomenon.}
\end{definition}
\begin{example}[Herding spoils learning]\label{ex_clique}
Let the a-priori network $G$ be the $n$-clique $K_n$, the complete graph on $n$ vertices. In this case, every agent $v$ observes actions of all those who came earlier. {We consider an equilibrium $\sigma$, where in the case of indifference, each agent follows his signal (one can show that learning quality in other equilibria can only be worse).}
If the first two agents take the incorrect action $a=1-\theta$, then the third agent will ignore his signal repeating this wrong action since the chance that both his predecessors are wrong is lower than the chance of him getting the wrong signal, and so on.
This phenomenon known as herding ruins information aggregation. Indeed,  with a probability of at least $(1-p)^2$ all the agents take the incorrect action. Thus  {$L_\sigma(K_n)\leq 1-(1-p)^2$;} i.e., the learning quality is bounded away from $1$ even for large cliques. {By symmetry, $l_\sigma(v)=L_\sigma(K_n)$ for all agents $v$.}
\end{example}
\citet{bahar2019celeb} {consider a similar model of learning with random arrivals  and ask whether there exist networks that support learning.} They provide an affirmative answer to this question by identifying a family of \emph{celebrity graphs}, \new{the only known family of networks that supports learning with random arrivals.} We will construct another such family in Section~\ref{sec_egalitarian}. 

\begin{example}[Celebrity graphs support learning]\label{ex_celebrity}
Consider a two-tier society, where a large set of $k$ ``commoners'' observe  a large but smaller set of $m$ ``celebrities,''  $1\ll m\ll k$. The corresponding {a-priori}  network is a complete bipartite graph $B_{k,m}$; {see Figure~\ref{fig_celebrity}.}

{On average,  a set of $\frac{k}{m+1}\gg 1$ commoners  arrive  before the first celebrity. These commoners take their actions in isolation, i.e., without observing anybody else, and so these actions are dictated solely by their signals. Such isolated commoners play the role of ``guinea pigs'' \citep{sgroi2002}. 
Since $\frac{k}{m+1}\gg 1$, the law of large numbers suggests that the first celebrity aggregates information from these i.i.d. inputs and thereby takes the correct action with high probability. Subsequent commoners observe this celebrity and, hence, are also likely to make the right choice. This correct action propagates  and so we conclude that the whole population except for a negligible fraction of $\frac{1}{m+1}$ commoners takes the right action with high probability.} 
This informal reasoning suggests that for any $\delta>0$ we can find  $k$ and $m$ such that the learning quality of {each agent and of the network itself} is at least $1-\delta$.
The formal argument in~\citet{bahar2019celeb} is tricky since a celebrity must ``guess'' which observed commoners are isolated and which are probably not.

{Although the celebrity graphs support learning they are not robust to elimination of small groups of agents. Indeed, the elimination of all the celebrities  would render the commoners completely isolated and so learning would not obtain. \new{We also note that the fact that celebrity graphs support learning hinges on the assumption of a random arrival order. Indeed, if all commoners arrive before celebrities, then all of them make a decision in isolation and learning fails.}

} 

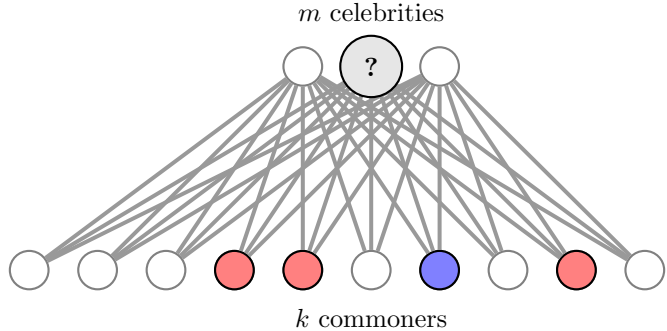
\begin{figure}[h]
	\definecolor{olive}{cmyk}{0.21,0,0.56,0.58}
	\begin{center}	
		\begin{tikzpicture}[transform shape,line width=1.5pt,scale=0.9]
\foreach \x  in {1,2,3}{%
	\node[draw,inner sep=0.2cm, circle , black!50] (C-\x) at (4+\x, 1.5) [thick] {};
}

\foreach \y  in {1,...,10}{%
	\node[draw,inner sep=0.2cm, circle , black!50] (N-\y) at (\y, -1.5) [thick] {};
}

	\node[draw,inner sep=0.2cm, circle , fill=red!50] (N-9) at (9, -1.5) [thick] {};
	\node[draw,inner sep=0.2cm, circle , fill=red!50] (N-5) at (5, -1.5) [thick] {};
	\node[draw,inner sep=0.2cm, circle , fill=blue!50] (N-7) at (7, -1.5) [thick] {};
	\node[draw,inner sep=0.2cm, circle , fill=red!50] (N-4) at (4, -1.5) [thick] {};

%
%

\foreach \x  in {1,...,3}{%
	\foreach \y in {1,...,10}{%
		\path[black!40] (C-\x) edge[-] (N-\y);
	}
}

\node[draw,inner sep=0.2cm, circle , fill=black!10] (C-2) at (4+2, 1.5) [thick] {\textbf{?}};

\node at (6, 2.3) {{{$m$ celebrities}}};
\node at (6, -2.2) {{{$k$ commoners}}};
		\end{tikzpicture}
	\end{center}
	\caption{The celebrity graph. When the first celebrity arrives, he typically observes $\frac{k}{m+1}\gg1$ commoners who made their decisions in isolation. These decisions match the signals and, hence, the celebrity learns from a number of independent sources. As a result, he  takes a well-informed action, and then this action propagates.
	\label{fig_celebrity}
	}
\end{figure}
\end{example}


{
\subsection{Notation} Throughout the paper we use the following  notation. For a real number $x$, the floor and the ceiling  are denoted by $\lfloor x \rfloor$ and $\lceil x\rceil$, respectively. The former is the biggest integer $n$ such that $n \leq x$ and the latter is the smallest integer $n\geq x$.  The base of the natural logarithm is denoted by $e$.

 Consider a network $G=(V,E)$ that is possibly directed. By $\deg_G (v)$ we denote the total degree of a vertex $v$, i.e., the number of $u\in V$ such that at least one of the edges, $uv$ or $vu$, is in $E$. The network is \emph{$D$-regular} if $\deg_G(v)=D$ for all $v\in V$.   A map $f: \ V\to V$ is called an \emph{automorphism} of $G$ if $f$ is a bijection and it preserves edges, i.e., $uv\in E \Leftrightarrow f(u)f(v)\in E$.
 
 \new{A network $G'=(V',E')$ is a \emph{subnetwork} of $G$ (denoted by $G'\subset G$) if $V'\subset V$ and $E'\subset E\cap (V'\times V')$. The subnetwork is \emph{induced} by the set of vertices $V'$ if $E'=E\cap (V'\times V')$; such a subnetwork is denoted by $G^{V'}$.}
\new{We denote by $G\setminus v=G^{V\setminus \{v\}}$ the induced subnetwork obtained by deletion of a given vertex~$v$.}

A \emph{path of length $k$} in   $G$ is a sequence of vertices $(u_0,u_1,\ldots, u_k)$ such that
all the edges $u_i u_{i+1}$, $i=0,1,\ldots,k-1$ belong to $E$.
 The distance $d_G(v,v')$ between  two vertices $v,v'\in V$ is the minimal $k$ such that there is a path of length $k$ with $u_0=v$ and $u_k=v'$; if there is no such $k$, the distance is infinite, $d_G(v,v')=+\infty$.  The \emph{$r$-neighborhood} $B_{r,G}(v)$ of a vertex $v$ is the set of all vertices $v'$ such that $d_G(v,v')\leq r$; the number $r$ is the \emph{radius} of the neighborhood. \new{Abusing the notation, we will not distinguish between the $r$-neighborhood $B_{r,G}(v)$ (the set of vertices) and the subnetwork of $G$ induced by this set of vertices.}
A \emph{cycle of length $k$} is a path with $u_0=u_k$ such that all vertices $u_0,u_1,\ldots, u_{k-1}$ are distinct (sometimes cycles with no repetitions are called simple). The \emph{girth} $g_G$ is the length of the shortest cycle.

\new{When referring to the a-priori network $G$, we will omit the dependence of all the objects on $G$ and write simply $\deg(v)$, $d(v,v')$, $B_r(v)$, and $g$.}  
}




{
\section{Learning and the local network structure}\label{sec_local_structure}
In this section we study the connection between the local structure \new{of the a-priori network in the neighborhood of an agent} and his learning \new{quality.} 




\new{In the case of a fixed arrival order, a major impediment for learning is that an agent can affect the decision of those who are far from him in the social network, which results in the possibility of large information cascades involving most of the network. Surprisingly, in our model with  random arrival order,
the decisions have a \emph{local nature}:} in Section~\ref{subsect_local_nature}, we show that if a pair of agents are far from each other in the a-priori network $G$, \ed{with high probability 
the action of one cannot affect the other.} 
This observation suggests that the learning quality of an agent must be determined by the local structure of his neighborhood in the a-priori network $G$, not by the global topology of $G$.


%

{\new{The local nature of decisions}
	 motivates one to look for a condition on the topology of an agent's neighborhood that ensures that this particular agent learns the state with high probability no matter what the global topology of the network is and the learning qualities of other agents are. We identify such a \emph{local learning requirement} in Section~\ref{subsec_local_requirement}.}

{
\subsection{\new{Local nature of decisions}}\label{subsect_local_nature}
For a pair of agents $v$ and $u$, the action of $v$ may be affected by the choice made by $u$  if $v$ observes $u$, or if $v$ observes somebody who observes $u$, or if $v$ observes somebody who observes somebody who observes $u$, and so on. 

Given the collection of arrival times $T$, the agent $v$ can be influenced by $u$ only if there is a path $(v=u_0,u_1,u_2,\ldots,u_k=u)$ in the a-priori network $G$ such that $t_{u_i}>t_{u_{i+1}}$ for all $i$ (i.e., if $u$ is reachable from $v$ by a path in the realized network $G_T$). 
We  define the \emph{realized subnetwork} $G_{v,T}$  of an agent $v$ to be the induced subnetwork of the realized network $G_T$ composed of $v$ and all such $u$ reachable from $v$ in $G_T$.  
Whether $v$ learns the state is determined solely by his realized subnetwork: conditional on $G_{v,T}$, the action of $v$ is independent of the signals and arrival times of all the agents outside $G_{v,T}$.

Recall that $B_{r}(v)$ denotes the $r$-neighborhood of $v$ in the a-priori network $G$ and $\deg(v)$ denotes the degree of $v$ in $G$. The following lemma shows that the realized subnetwork of $v$ is contained in the neighborhood of $v$ with high probability provided that the radius of the neighborhood is big enough compared to the maximal degree.
\begin{lemma}[\new{Local nature of decisions}]\label{lm_localization_body}
	The probability  that the realized subnetwork of $v$ is contained in the $(r-1)$-neighborhood of $v$ enjoys the following lower bound:
	\begin{equation}\label{eq_localization_body}
		\P\Big(G_{v, T}\subset B_{r-1}(v)\Big)\geq 1-2\cdot\left(\frac{e\cdot \max_{u\in B_r(v)} \deg(u)}{r}\right)^{r}.
		\end{equation}
\end{lemma}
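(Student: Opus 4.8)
The plan is to bound the probability of the complementary event $\mathcal{E}=\{G_{v,T}\not\subset B_{r-1}(v)\}$ by a union bound over short monotone paths. By the definition of the realized subnetwork, $\mathcal{E}$ occurs precisely when some vertex $u$ with $d(v,u)\ge r$ is reachable from $v$ in $G_T$, i.e.\ when $G$ contains a path $(v=u_0,u_1,\ldots,u_k=u)$ whose arrival times strictly decrease, $t_{u_0}>t_{u_1}>\cdots>t_{u_k}$; I call such a path \emph{decreasing}. Note that a decreasing path automatically visits distinct vertices.

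The combinatorial core is to replace this possibly long path by a decreasing path of length exactly $r$ that stays inside $B_r(v)$. Since $u_iu_{i+1}\in E$, the triangle inequality gives $|d(v,u_{i+1})-d(v,u_i)|\le 1$, so along the path the distance to $v$ increases by at most one per step; as it starts at $0$ and ends at a value $\ge r$, it must attain the value $r$ at some index $j\ge r$. Truncating at the first such $j$ and discarding all but the first $r$ edges yields a decreasing path $(u_0,\ldots,u_r)$ with $u_0=v$, all of whose vertices lie in $B_r(v)$ (indeed $u_0,\ldots,u_{r-1}\in B_{r-1}(v)$). Hence $\mathcal{E}$ implies that $B_r(v)$ contains a decreasing path of length $r$ emanating from $v$, and it suffices to bound the probability of the latter.

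I would then apply a union bound. Writing $D=\max_{u\in B_r(v)}\deg(u)$, the number of length-$r$ paths issuing from $v$ and staying in $B_r(v)$ is at most $D^r$, since each of the $r$ steps offers at most $D$ continuations. For a fixed such path the $r+1$ vertices are distinct, so their arrival times are i.i.d.\ uniform and hence almost surely distinct; all $(r+1)!$ orderings being equally likely, the probability of the single decreasing order equals $\tfrac{1}{(r+1)!}\le\tfrac{1}{r!}$. Summing over the at most $D^r$ candidate paths,
$$\P(\mathcal{E})\le \frac{D^r}{r!}\le \left(\frac{e\, D}{r}\right)^{r},$$
where the last inequality uses $r!\ge (r/e)^r$. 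This already beats the target bound, the stated factor $2$ being harmless slack.

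The only genuinely delicate step is the reduction in the second paragraph: verifying that reachability of a vertex at distance $\ge r$ forces a decreasing path of length exactly $r$ confined to $B_r(v)$, which rests on the fact that graph distance to $v$ changes by at most one along each edge. Everything else — the count of paths and the ordering probability — is routine, and the independence and uniformity of the arrival times is exactly what makes the ordering probability clean.
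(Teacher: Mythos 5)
Your proof is correct, and although it shares the paper's skeleton --- a union bound over arrival-time-decreasing paths, the factorial probability of a prescribed arrival order, and Stirling's bound $r!\ge (r/e)^r$ --- it organizes the union bound in a genuinely different way. The paper's proof (Lemma~\ref{lm_localization_appendix}) unions over paths joining $v$ to the sphere $B_r(v)\setminus B_{r-1}(v)$ of \emph{every} length $k\ge r$, and therefore has to control the tail series $\sum_{k\ge r} D^k/k!$; that is where its geometric-ratio argument, the without-loss-of-generality assumption $r>e\cdot D$, and the factor $2$ in \eqref{eq_localization_body} all come from. Your truncation step --- cut an offending decreasing path at the first index at which its distance from $v$ equals $r$, then keep only the first $r$ edges --- reduces the bad event to the existence of a decreasing path of length \emph{exactly} $r$, so the union bound is finite and gives $\P\big(G_{v,T}\not\subset B_{r-1}(v)\big)\le D^r/(r+1)!\le (eD/r)^r$: a constant of $1$ instead of $2$ and no case analysis on the size of $r$ relative to $D$. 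A further small advantage is that your truncated paths are confined to $B_r(v)$ by construction, which is precisely what justifies bounding the branching at each step by $D=\max_{u\in B_r(v)}\deg(u)$; the paper's sketch leaves this point implicit, since a path from $v$ to the sphere could a priori wander outside $B_r(v)$. One caveat for downstream use: the version of the lemma invoked in the proof of Theorem~\ref{th_local_requirement} is conditional on $v$'s arrival time (Lemma~\ref{lm_localization_appendix}), but your argument adapts verbatim, since conditional on $t_v=t$ a fixed length-$r$ path starting at $v$ is decreasing with probability $t^r/r!\le 1/r!$.
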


An extended version of the lemma is  proved in Appendix~\ref{app_localization}. Here we present a sketch.
\begin{proof}[Proof sketch for Lemma~\ref{lm_localization_body}]
The realized subnetwork of $v$ belongs to the $(r-1)$-neighborhood if and only if no path $(v=u_0,u_1,u_2,\ldots,u_k=u)$ connecting $v$ and the sphere  $B_r(v)\setminus B_{r-1}(v)$ in the a-priori network $G$ is presented in the realized network $G_T$. Note that a particular path of this form exists in $G_T$ if and only if $u_k$ arrives first, then $u_{k-1}$, then $u_{k-2}$, and so on. Hence, each path of length $k$ in $G$ remains in the realized network with  probability $\frac{1}{(k+1)!}$. Each path has length $k\geq r$ and the total number of paths of length $k$ connecting $v$ and the sphere in $G$ is bounded by $\big(\max_{u\in B_r(v)} \deg(v)\big)^k$. The union bound accompanied by manipulations with factorials similar to the Stirling formula lead to the desired inequality~\eqref{eq_localization_body}.
\end{proof}

\subsection{Local learning requirement}\label{subsec_local_requirement}

The previous subsection suggests that whether or not an agent $v$ learns the state must be determined by the local structure of the a-priori network $G$ around him. Here we formulate a \emph{local learning requirement} on the topology of $v$'s neighborhood that ensures high learning quality for $v$ no matter how well other agents perform and what the global structure of the network is. In subsequent sections we demonstrate the usefulness of this condition by constructing networks with exceptional learning and robustness properties. 

The essence of the requirement is that $v$ must bridge many  social circles. To define the requirement formally, \new{we recall the following notation:}  
$G\setminus v$ is the network obtained from the a-priori network $G$ by elimination of an agent $v$  together with all adjacent edges and $B_{r,G\setminus v}(u)$ is the $r$-neighborhood of an agent $u$ in $G\setminus v$.
\begin{definition}[Local learning requirement]\label{def_loc_requirement}
	An agent $v$  satisfies the local learning requirement with parameters $(d,r,D)$ if among his friends we can find $d$ such that each of them has degree at least $d$, their $r$-neighborhoods in $G\setminus v$ are disjoint, and the degrees in all these neighborhoods are upper-bounded by $D$.	
\end{definition}
Let  $u_1,\ldots u_d$ be the friends from the definition. Then their neighborhoods $B_{r,G\setminus v}(u_i)$ can be interpreted as disjoint social circles bridged by $v$. The intuition why the definition refers to the network $G\setminus v$ relies on the fact that the decision of $v$ depends only on those agents that arrive before him and, hence, do not observe $v$ as if he was absent from the network.

Figure~\ref{fig_triplet} illustrates the definition. The following theorem is our main technical result. 
\begin{figure}
	\begin{center}	
	\includegraphics{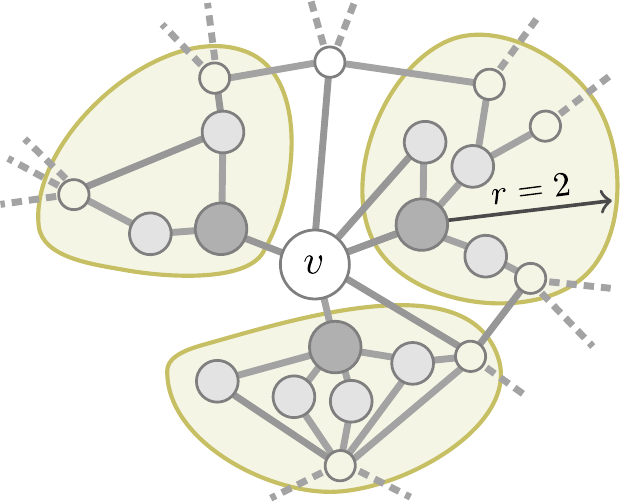}
	\end{center}
	\caption{Agent $v$ satisfies the local learning requirement with parameters $d=3$, $r=2$, and $D=7$. He has $3$ friends with degree at least $3$ (dark gray nodes), their $2$-neighborhoods in the graph $G\setminus v$ are disjoint (shaded areas), and the maximal degree in these neighborhoods is $7$ (the agent at the bottom has $5$ friends within the neighborhood and $2$ outside).
\label{fig_triplet}}
\end{figure}
\begin{theorem}\label{th_local_requirement}
	If an agent $v$ satisfies the local learning requirement with parameters $(d,r,D)$, then the learning quality of $v$ enjoys the following lower bound:
	\begin{equation}	\label{eq_local_quality_th_bound}
		l_\sigma(v)\geq 1-\delta(p, d, r, D),
		\end{equation}
		where
		\begin{equation}\label{eq_local_quality_th_delta_def}
		\delta(p, d, r, D)= \psi + \frac{18}{\sqrt{d-1}\,(2p-1-\psi)}, 	\ \ \ \ \ \ \ \psi=2d\cdot \left(\frac{e\cdot D}{r}\right)^{r}.
		\end{equation}
The bound holds for any state-symmetric equilibrium $\sigma$ provided that the probability of the correct signal satisfies $p\geq \frac{1+\psi}{2}$; i.e., the denominator in~\eqref{eq_local_quality_th_delta_def} is positive.\footnote{{The condition of state-symmetry can be dropped at the cost of getting $(4p-3-\psi)$ in the denominator of~\eqref{eq_local_quality_th_delta_def} instead of $(2p-1-\psi)$ and of imposing the stricter condition on $p$ to ensure the positivity of the new denominator (see the discussion on the technical assumptions in Section~\ref{sec_discuss}).}}
\end{theorem}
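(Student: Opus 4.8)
The plan is to bound $l_\sigma(v) = \P_\sigma(a_v = \theta)$ from below by the performance of a single explicit (and deliberately suboptimal) strategy for $v$. Since $\sigma$ is an equilibrium, $v$'s action is a best reply to the strategies of the others, so $\P_\sigma(a_v = \theta) \geq \P_\sigma(\tilde a_v = \theta)$ for \emph{any} action $\tilde a_v$ that is measurable with respect to $v$'s information $I_v = (s_v, F_{v,T}, (a_u)_{u \in F_{v,T}})$. I would use the benchmark in which $v$ identifies, among his observed friends $F_{v,T}$, those belonging to the distinguished set $\{u_1, \dots, u_d\}$ from the local learning requirement (this is $I_v$-measurable because $G$ is common knowledge), and then takes the majority action among them, breaking ties --- and in particular the case where too few of the $u_i$ are observed --- by following his own signal $s_v$. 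Everything then reduces to showing that this majority is correct with probability at least $1 - \delta$.

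The first substantive step is to secure approximate conditional independence of the votes. By the local learning requirement the neighborhoods $B_{r, G\setminus v}(u_i)$ are pairwise disjoint, so I would invoke Lemma~\ref{lm_localization_body} (applied in $G \setminus v$ to each $u_i$) together with a union bound to conclude that, with probability at least $1 - \psi$ where $\psi = 2d(eD/r)^r$, the realized subnetwork of every \emph{observed} $u_i$ stays inside its own circle $B_{r-1, G\setminus v}(u_i)$. The point of passing to $G \setminus v$ is that, conditional on $t_{u_i} < t_v$, the agent $u_i$ does not observe $v$, so his realized subnetwork, his information set, his action $a_{u_i}$, and the event that it is localized are all determined by signals and arrival times in $B_r(u_i)$ with $v$ removed; since the circles are disjoint, these data are independent across $i$. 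Conditioning on the state $\theta$ and on $v$'s own arrival time $t_v = \tau$ then makes the pairs (``$u_i$ observed'', ``$u_i$ votes correctly'') independent across $i$, which is exactly what the concentration step needs.

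Next I would pin down the bias of a single vote. A rational $u_i$ does at least as well as following his own signal, so on each information set his action matches $\theta$ with conditional probability at least $\P(s_{u_i} = \theta \mid I_{u_i})$, and state-symmetry upgrades this to the per-state statement $\P(a_{u_i} = \theta \mid \theta = j) \geq p$ for $j \in \{0,1\}$; the localized vote therefore has signed bias at least $2p - 1$, reduced to $2p - 1 - \psi$ once we account for the event (of probability at most $\psi$) that localization fails and the vote is left to an adversarial equilibrium, which is where the hypothesis $p \geq \tfrac{1+\psi}{2}$ enters to keep the bias positive. Conditionally on $t_v = \tau$, the number $N$ of observed $u_i$ is $\mathrm{Bin}(d, \tau)$ and the correct votes form a sum of $N$ conditionally independent Bernoulli variables of bias at least $2p - 1 - \psi$; a Chebyshev (second-moment) estimate bounds the probability that the majority is wrong by a constant times $\frac{1}{N (2p-1-\psi)^2}$. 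I would split according to whether $N$ exceeds a threshold $m$: when $N \leq m$ the fallback to $s_v$ costs at most $1 - p$ and occurs with probability of order $m/d$ (marginally $N$ is uniform on $\{0, 1, \dots, d\}$), while when $N > m$ the majority errs with probability at most a constant times $\frac{1}{m(2p-1-\psi)^2}$; choosing $m$ of order $\sqrt{d}/(2p-1-\psi)$ balances the two and produces the $\frac{18}{\sqrt{d-1}\,(2p-1-\psi)}$ term, to which the additive $\psi$ from the localization-failure event is appended to give $\delta$.

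The main obstacle is the conditioning in the bias step. Because an agent does not know his own arrival time and the equilibrium action is optimal only against the \emph{unconditional} posterior $\P(\theta \mid I_{u_i})$, transferring the ``beats-your-own-signal'' inequality to the conditional world --- where we have fixed $t_v = \tau$ and the observation event $t_{u_i} < t_v$, which reweights $u_i$ toward earlier, less-informed arrivals --- is not automatic. I expect to resolve this by the measurability observation above: on the event $t_{u_i} < t_v$ the entire triple $(\theta, I_{u_i}, \one[a_{u_i} = \theta])$ lives in the $\sigma$-algebra generated by everything except $t_v$, which is independent of $t_v$, so conditioning on the observation event only reweights by the factor $1 - t_{u_i}$ and preserves the per-information-set inequality after integration; state-symmetry is then what guarantees the bound holds in each state separately rather than merely on average (dropping it replaces $2p-1$ by $4p-3$, as in the footnote). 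Carefully tracking how the localization-failure probability $\psi$ degrades the per-vote bias, and optimizing the threshold $m$ to extract the $\sqrt{d-1}$ rate with the stated constant, are the remaining bookkeeping tasks.
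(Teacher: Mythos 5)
Your overall architecture matches the paper's: bound $l_\sigma(v)$ by the payoff of the deviation ``follow the majority of the distinguished friends who arrived earlier,'' use localization in $G\setminus v$ plus disjointness of the circles to get conditional independence of the votes, then apply a concentration inequality. The gap is in the step you yourself flag as the main obstacle, and your proposed resolution does not work. Independence of the triple $(\theta, I_{u_i}, \one[a_{u_i}=\theta])$ from $t_v$ only lets you replace conditioning on $\{t_{u_i}<t_v,\ t_v=\tau\}$ by conditioning on $\{t_{u_i}<\tau\}$; it does nothing about the fact that $\{t_{u_i}<\tau\}$ is correlated with $\theta$ \emph{given} $u_i$'s information set. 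Conditional on $I_{u_i}=\iota$, the arrival time $t_{u_i}$ is not independent of the state: the likelihood of the observed friends' actions under each state depends on how early $u_i$ arrived (early arrival means those friends were themselves poorly informed, so their actions are closer to noise). Hence conditioning on $\{t_{u_i}<\tau\}$ distorts the posterior \emph{inside} each information set, and equilibrium optimality of $a_{u_i}$ --- which is optimality against the unconditional posterior at $\iota$ --- no longer yields $\P(a_{u_i}\neq\theta \mid \iota,\ t_{u_i}<\tau)\le 1-p$; for small $\tau$ this conditional mistake probability can approach $1/2$. So ``reweights by the factor $1-t_{u_i}$ and preserves the per-information-set inequality after integration'' is precisely the false step: the reweighting is by a function of $t_{u_i}$, and $t_{u_i}$ is correlated with $\theta$ given $I_{u_i}$.

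The paper closes this gap with an approximation argument your proposal has no counterpart of: the unobservable event $\{t_{u_i}<\tau\}$ is replaced by the observable proxy $\{\hat t_{u_i}<\tau\}$, where $\hat t_{u_i}=|F_{u_i,T}|/(\deg(u_i)-1)$ is the fraction of $u_i$'s friends he has already seen. The proxy event is a union of $u_i$'s information sets, so the ``beats-your-own-signal'' bound legitimately applies to it, and binomial concentration controls the probability that the proxy and the true event disagree, at the cost of an error term $\frac{3}{\tau\sqrt{\deg(u_i)-1}}$ (Lemma~\ref{lm_upper_bound_mistake}). This is exactly where the hypothesis $\deg(u_i)\ge d$ from Definition~\ref{def_loc_requirement} enters --- a hypothesis your argument never invokes, which is a symptom of the gap: the per-vote bias bound genuinely degrades for early $t_v$, and the paper's final term $\frac{18}{\sqrt{d-1}\,(2p-1-\psi)}$ arises from cutting off arrival times $t_v$ below the scale $\frac{12}{\sqrt{d-1}(2p-1-\psi)}$ dictated by this approximation error, not solely from balancing the number of votes as in your Chebyshev bookkeeping. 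The rest of your outline (deviation argument, localization via Lemma~\ref{lm_localization_body} in $G\setminus v$, conditional independence across disjoint circles, concentration, and the final averaging over $t_v$ and the number of observed friends) is sound and parallel to the paper's proof.
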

}
\begin{corollary}\label{cor_delta_to_zero}
{The theorem shows that the agent learns the state whenever $\delta(p, d, r, D)$ is close to zero. We note that $\delta(p, d, r, D)$ goes to zero for fixed $p>\frac{1}{2}$ when all the elements of the triplet $(d, r, D)$  go to infinity such that $\lim\inf \frac{r}{D}> e$.} 	
\end{corollary}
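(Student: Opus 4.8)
The plan is to treat $\delta(p,d,r,D)=\psi+\frac{18}{\sqrt{d-1}\,(2p-1-\psi)}$ as a sum of two pieces and show that each vanishes in the prescribed limit. Everything hinges on first establishing that $\psi=2d\left(\frac{eD}{r}\right)^{r}\to 0$; once this is known, the second summand is immediate, because its denominator $2p-1-\psi$ converges to $2p-1>0$ (using that $p>\frac12$ is fixed) while the factor $\frac{1}{\sqrt{d-1}}$ tends to $0$ as $d\to\infty$. So the core of the argument is the asymptotics of $\psi$.

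To control $\psi$, I would first use the hypothesis $\liminf \frac{r}{D}>e$: it guarantees that for all sufficiently large parameters $\frac{eD}{r}\le c$ for some constant $c<1$, so that $\left(\frac{eD}{r}\right)^{r}\le c^{r}$ decays exponentially in $r$. The only subtlety is the prefactor $2d$, which a priori could outgrow the exponentially small factor $c^{r}$. This is where I would invoke the fact that the triplet $(d,r,D)$ originates from the local learning requirement: the $d$ selected friends each have degree at least $d$ in $G$ and therefore at least $d-1$ in $G\setminus v$, while all degrees inside their neighborhoods in $G\setminus v$---including the friends themselves, which are the centers---are bounded by $D$; hence $d\le D+1$. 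Combined with $\frac{eD}{r}\le c<1$, which forces $D\le \frac{cr}{e}$, this yields $d=O(r)$. Consequently $\psi\le 2(D+1)\,c^{r}=O(r)\,c^{r}\to 0$, since exponential decay dominates a linear prefactor.

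With $\psi\to 0$ in hand the conclusion follows: the first summand of $\delta$ vanishes, while the second summand $\frac{18}{\sqrt{d-1}\,(2p-1-\psi)}$ has a denominator converging to $2p-1>0$ and a vanishing factor $\frac{1}{\sqrt{d-1}}\to 0$, so it too tends to $0$; hence $\delta\to 0$. One may note in passing that the theorem's standing hypothesis $p\ge\frac{1+\psi}{2}$ is automatically met for large parameters, since its right-hand side tends to $\frac12<p$. The step I expect to be the main obstacle is precisely the control of the prefactor $2d$ in $\psi$: without the structural bound $d\le D+1$ coming from the local learning requirement, one could in principle send the triplet to infinity with $d$ growing exponentially in $r$, in which case $\psi$---and thus $\delta$---would fail to vanish. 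Making explicit that admissible triplets satisfy $d=O(D)=O(r)$ is therefore the essential point of the argument.
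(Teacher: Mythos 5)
Your proof is correct, and it is actually more careful than the paper, which asserts Corollary~\ref{cor_delta_to_zero} without proof. The decomposition you use---first proving $\psi=2d\left(\frac{eD}{r}\right)^{r}\to 0$, then noting that the second summand has a denominator converging to $2p-1>0$ and a factor $\frac{1}{\sqrt{d-1}}\to 0$---is the natural (and essentially the only) route, and it is surely what the authors intend. The genuinely valuable part of your write-up is the point the paper glosses over: read as a statement about the formula $\delta$ for \emph{arbitrary} triplets tending to infinity, the claim fails, since the prefactor $2d$ can outrun the exponential decay (e.g., $d=4^{r}$, $D=\frac{r}{2e}$ gives $\psi=2\cdot 2^{r}\to\infty$). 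Your repair is exactly right: any triplet realizable in Definition~\ref{def_loc_requirement} satisfies $d\le D+1$, because each selected friend is the center of its own $r$-neighborhood in $G\setminus v$, has degree at least $d-1$ there, and all degrees in that neighborhood are bounded by $D$; hence $d=O(D)=O(r)$ and $\psi=O(r)\,c^{r}\to 0$ with $c<1$. This reading is consistent with every use of the corollary in the paper: in the proofs of Theorem~\ref{th_egalitarian} and Lemma~\ref{lm_robust} the triplets are $\left(D,\left\lfloor\frac{g-3}{2}\right\rfloor,D\right)$ and $\left(\frac{\gamma}{2}D,\left\lfloor\frac{g-3}{2}\right\rfloor,D\right)$, both with $d\le D$, so the constraint you identified is automatically met there.
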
	
{The intuition behind Theorem~\ref{th_local_requirement} is simple. \new{Let $u_1,\ldots, u_d$ be the friends of $v$ from Definition~\ref{def_loc_requirement}. By the \new{local nature of decisions} (Lemma~\ref{lm_localization_body}), the realized subnetwork of each $u_i$, who arrives before $v$, is likely to be contained within his social circle $B_{r,G\setminus v}(u_i)$.} Hence, the realized subnetworks are disjoint, which ensures no herding among $(u_i)_{i=1,\ldots,d}$. As a result, $v$ learns the state by observing a large sample of independent sources of information.}
However, formalization of this intuition faces some obstacles: the independence is only conditional and the conditioning is on a family of events that do not belong to the information partition of any of the agents; hence, checking that the independent sources are \ed{informative} requires the approximation of the condition by elements of information partitions; we are able to carry out this approximation only for  high-degree friends of $v$ since their partitions are finer {(hence, the condition $\deg(u_i)\geq d$).}

{Here we present a sketch of the proof; all the details can be found in Appendix~\ref{appendix_local_requirement}.}
\begin{proof}[Sketch of the proof of Theorem~\ref{th_local_requirement}]
 \new{Denote by $F_{v,T}^d$ the subset of those friends $u_1,\ldots, u_d$  from Definition~\ref{def_loc_requirement} who arrive earlier than $v$.}
Consider the following deviation $a_v'$ from an equilibrium strategy $a_v$ of  $v$. He repeats the action played by the majority of $F_{v,T}^d$.
	In equilibrium, the agent cannot benefit from this deviation and, hence,
		$l_\sigma(v)\geq \P(a_v'=\theta)$.
	
	{The probability of a mistake for $a_v'$ can be bounded using the Hoeffding inequality.\footnote{\label{footnote_Hoeffding}The Hoeffding inequality~\citep{hoeffding1994probability} states that $\P\left(\frac{1}{N}\sum_{n=1}^N \xi_i\leq \E\left(\frac{1}{N}\sum_{n=1}^N \xi_i\right)-x \right)\leq \exp\left(- 2 x^2 N\right)$ for independent random variables $0\leq \xi_i\leq 1$ and any $x\geq 0$.}  This requires two ingredients: independence of the actions $a_u$ for $u\in F_{v,T}^d$ and a lower bound on the probability of a mistake for $a_u$.} 

	\ed{Unfortunately, the requirement of independent actions is not satisfied.} 
	However, \ed{the actions} can be made independent by conditioning on a certain collection of events. This should still be enough to drive the result \ed{provided that} the probability \ed{of this collection  is close to one}. 	
	Let $W_v$ be the event that  the realized subnetwork of each $u_i$ who arrived before $v$ is contained in his social  circle $B_{r,G\setminus v}(u_i)$. 
	Since these social circles are disjoint, the realized subnetworks do not intersect and thus each $u_i$ aggregates the information from disjoint families of  sources conditional on $W_v$.	\ed{The} \new{local nature of decisions} (Lemma~\ref{lm_localization_body})  implies that the probability of  $W_v$ is close to one.
	
	Naively, one could argue that conditional on $W_v$ the actions taken by the friends of $v$ are independent. \ed{However,} this argument is wrong.  There are other sources of dependence. For example, the fact that we are interested in agents who arrive \emph{before $v$} creates dependence between their actions even if their realized subnetworks are disjoint. To see this consider the case where $v$ arrives early. This implies that all the friends he observes are also early arrivals and, hence, are likely to have smaller realized subnetworks compared to the case where $v$ arrives late; in particular, the earlier $v$ arrives, the more mistakes the observed friends make, thus creating  dependence between their decisions.
	
	To eliminate all the sources of dependence, we end up conditioning on the arrival time of $v$, the set $F_{v,T}^d$, the realized state, and the event $W_v$.
	
	{In order to apply the Hoeffding inequality, it remains to show that the conditional probability of a mistake for $a_u$ is bounded away from $1/2$ for $u\in  F_{v,T}^d$, i.e., that the independent sources observed by $v$ are informative. We use the following idea. For any event $A$ from the information partition of $u$, we have $\P(a_u\ne \theta\mid A)\leq 1-p$ because otherwise $u$ is better off  following his signal whenever $A$ occurs; additional conditioning on the realized state does not change the bound because we assume a state-symmetric equilibrium.	
		Unfortunately, the family of events on which we condition
 does not belong to $u$'s information partition. We overcome this difficulty by approximating the condition to elements of the information partition and showing that  the conditional probability of a mistake is at most $\frac{1}{1-\psi}\left(1-p + \frac{3}{t_v\cdot \sqrt{\deg(u)-1}}\right).$
		The bound gets worse for low-degree agents since their information partition is not fine enough for a good approximation. This is why the deviation $a_v'$ of agent $v$ takes into account high-degree friends only.}
	
	{After these preparations, Theorem~\ref{th_local_requirement} becomes a corollary of the Hoeffding inequality.}	
\end{proof}	

\section{Implications: Symmetric networks that support learning}\label{sec_egalitarian}

{Here we demonstrate that there are a-priori networks where  \emph{each agent} satisfies the local learning requirement from the previous section and, hence, all agents achieve high learning quality as does the network itself. Moreover, there are such networks 
 with the additional property that any two agents play the same role.} 
\begin{definition}[Symmetric networks]
A network $G=(V,E)$ is symmetric if for any pair $v,v'\in V$  {there exists an automorphism $f$ of $G$  such that\footnote{In mathematical literature, such graphs are usually called \emph{transitive} since the group of automorphisms acts transitively on them; i.e., for any pair of vertices there is an automorphism mapping one to the other.} $f(v)=v'$.}
\end{definition}
{Symmetric networks represent totally egalitarian societies.  \ed{The celebrity graphs from  Example~\ref{ex_celebrity} may lead to a conjecture that egalitarian societies cannot aggregate information since one needs a designated minority of agents (like celebrities) in the a-priori network for learning to propagate.
	The main result of this section states that this intuition is false and symmetric networks  can support learning.} In Section~\ref{sec_robust},  
we will strengthen this result by showing robustness of learning.
\begin{theorem}\label{th_egalitarian}
For any $p_0>\frac{1}{2}$ and any $\delta>0$ there exists a symmetric network	$G=(V,E)$ such that the learning quality $l_\sigma(v)\geq 1-\delta$ for each agent $v\in V$, any probability of the correct signal $p>p_0$, and any {state-symmetric} equilibrium\footnote{The theorem extends to non-state-symmetric equilibria at the cost of assuming that signals are informative enough, namely, $p_0>\frac{3}{4}$. This and other extensions are discussed in Section~\ref{subsec_getting_rid}.} $\sigma$.
\end{theorem}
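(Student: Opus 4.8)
My plan is to reduce the statement to a single application of Theorem~\ref{th_local_requirement}, exploiting symmetry to propagate the conclusion across all agents. Since the network I am after is vertex-transitive, it suffices to exhibit \emph{one} agent $v$ satisfying the local learning requirement (Definition~\ref{def_loc_requirement}) with good parameters: if $v$ satisfies it with parameters $(d,r,D)$ and $f$ is an automorphism with $f(v)=v'$, then $f$ carries the $d$ witnessing friends and their pairwise disjoint $r$-neighborhoods in $G\setminus v$ onto $d$ friends of $v'$ with disjoint $r$-neighborhoods in $G\setminus v'$, so every agent inherits the same parameters. Thus the whole task becomes: for each target accuracy, produce a symmetric network in which the local learning requirement holds with parameters forcing $\delta(p,d,r,D)<\delta$ uniformly in $p>p_0$.

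I would obtain such networks from high-girth regular graphs. Fix a constant $C>e$ (say $C=3$) and suppose $G$ is a vertex-transitive $D$-regular graph whose girth satisfies $g>2r+2$, where $r=\lceil CD\rceil$. Set $d=D$. Picking any vertex $v$, its neighbors $u_1,\dots,u_D$ each have degree exactly $D=d\ge d$, and every degree in $G$ (hence in each of their neighborhoods) is at most $D$. The only nontrivial point is disjointness of the neighborhoods $B_{r,G\setminus v}(u_i)$: if some $w$ lay in two of them, then $d_{G\setminus v}(u_i,u_j)\le 2r$, and concatenating a shortest such path with the edges $vu_i$ and $u_jv$ would produce a cycle of length at most $2r+2$ in $G$, contradicting $g>2r+2$. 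Hence $v$ — and by transitivity every agent — satisfies the local learning requirement with parameters $(D,\lceil CD\rceil,D)$.

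To conclude, I would invoke Corollary~\ref{cor_delta_to_zero} quantitatively. With $d=D$ and $r=\lceil CD\rceil$ one has $r/D\to C>e$ and $\psi=2D(eD/r)^r\le 2D(e/C)^{CD}\to 0$, so for $D$ large enough $\psi<2p_0-1\le 2p-1$ (which simultaneously verifies the hypothesis $p\ge\frac{1+\psi}{2}$ of Theorem~\ref{th_local_requirement}) and the denominator obeys $2p-1-\psi\ge\frac{2p_0-1}{2}>0$ uniformly in $p>p_0$; therefore $\delta(p,D,\lceil CD\rceil,D)\le \psi+\frac{36}{\sqrt{D-1}\,(2p_0-1)}\to 0$ as $D\to\infty$. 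Choosing $D$ large enough that this bound drops below $\delta$, Theorem~\ref{th_local_requirement} yields $l_\sigma(v)\ge 1-\delta$ for every agent $v$, every $p>p_0$, and every state-symmetric equilibrium $\sigma$.

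The hard part will be supplying the graphs used in the second paragraph: vertex-transitive, $D$-regular, with girth growing at least linearly in $D$ (girth $>2\lceil CD\rceil$ with $C>e$) as $D\to\infty$. Random regular graphs are useless here, since their girth stays bounded, so explicit algebraic constructions are required. This is exactly where the expander theory of Appendix~\ref{sect_expanders} enters: the Ramanujan graphs are Cayley graphs, hence vertex-transitive and regular, and for each admissible degree they attain arbitrarily large girth as the vertex set grows; taking the degree large and the girth at least $2\lceil CD\rceil+3$ produces the desired family. Checking that a single family enjoys both properties at once — and that the (sparse) set of admissible degrees still tends to infinity, so that $D$ can be pushed as high as needed — is the real crux, and it is what makes the expander machinery indispensable.
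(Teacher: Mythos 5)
Your proposal is correct and follows essentially the same route as the paper: it re-derives Lemma~\ref{lm_girth_and_llr} (high girth forces the disjoint-circles condition, so every vertex of a $D$-regular graph satisfies the local learning requirement with parameters $\bigl(D,\lfloor\frac{g-3}{2}\rfloor,D\bigr)$), invokes the Lubotzky--Phillips--Sarnak symmetric high-girth regular graphs exactly as in Corollary~\ref{cor_existence_high_girth}, and then applies Theorem~\ref{th_local_requirement} with a quantitative version of Corollary~\ref{cor_delta_to_zero}, handling uniformity in $p>p_0$ via the monotone bound on the denominator. The only (harmless) differences are your explicit automorphism step, which is redundant since regularity plus girth already covers every vertex, and your linear girth requirement $r/D\geq C>e$ in place of the paper's $g/D\to\infty$.
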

In order to prove Theorem~\ref{th_egalitarian}, we use classic results from the theory of expanders to construct networks, where the local  learning requirement is satisfied for each agent. 

We will need the following lemma, which simplifies checking the local learning requirement. Recall that the \emph{girth} $g$ of a network $G$ is the length of the shortest cycle; the girth of a tree is infinite, $g=+\infty$.
\begin{lemma}\label{lm_girth_and_llr}
Consider a network $G$ of girth $g$ and the maximal degree $D$. In such a network, any agent $v$ satisfies the local learning requirement with parameters $\Big(d,\Big\lfloor\frac{g-3}{2}\Big\rfloor,D\Big)$, where $d$ is a number such that $v$ has at least $d$ friends of degree $d$ or more.
\end{lemma}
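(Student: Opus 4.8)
The plan is to verify the three conditions of the local learning requirement (Definition~\ref{def_loc_requirement}) for the parameters $\big(d, r, D\big)$ with $r = \lfloor (g-3)/2 \rfloor$, working directly from the girth and maximum-degree hypotheses. First I would use the hypothesis to select $d$ distinct friends $u_1, \ldots, u_d \in F_v$, each of degree at least $d$; these exist by assumption, so the first condition of Definition~\ref{def_loc_requirement} holds by construction. The third condition --- that all degrees inside the neighborhoods are bounded by $D$ --- is immediate, since every vertex of $G$, and in particular every vertex lying in some $B_{r, G\setminus v}(u_i)$, has total degree at most the maximum degree $D$.

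The heart of the argument is the second condition, namely the pairwise disjointness of the neighborhoods $B_{r, G \setminus v}(u_i)$, and I would establish it by contradiction. Suppose that for some $i \neq j$ there were a vertex $w \in B_{r, G\setminus v}(u_i) \cap B_{r, G\setminus v}(u_j)$. Then $d_{G\setminus v}(u_i, w) \leq r$ and $d_{G \setminus v}(u_j, w) \leq r$, so by the triangle inequality for graph distance there is a shortest path $Q$ from $u_i$ to $u_j$ inside $G\setminus v$ of length at most $2r$. Being a shortest path, $Q$ is simple, and since it lives in $G\setminus v$ it avoids $v$ altogether.

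Next I would close this path into a cycle through $v$. Because $u_i$ and $u_j$ are distinct friends of $v$, the edges $v u_i$ and $v u_j$ both lie in $E$ and are distinct. Appending them to the two endpoints of $Q$ produces a closed walk $v, u_i, \ldots, u_j, v$. Since the internal vertices of $Q$ are pairwise distinct, none of them equals $v$, and $u_i \neq u_j$, this closed walk is in fact a simple cycle whose length equals $|Q| + 2 \leq 2r + 2$. By the choice $r = \lfloor (g-3)/2 \rfloor$ we have $2r \leq g - 3$, so the cycle has length at most $g - 1$, strictly less than the girth $g$. This contradicts the definition of girth as the length of the shortest cycle, which proves disjointness and completes the verification.

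I expect the main difficulty to be bookkeeping rather than conceptual: one must ensure that the object obtained by splicing the two geodesics and the two edges at $v$ genuinely reduces to a \emph{simple} cycle --- which is precisely why I pass to a shortest path $Q$ rather than an arbitrary concatenated walk, and why the distinctness of $u_i, u_j$ and the exclusion of $v$ are essential --- and one must track the floor in $r = \lfloor (g-3)/2 \rfloor$ so that the inequality $2r + 2 < g$ holds for both parities of $g$. The degenerate case $g = +\infty$ (when $G$ is a forest) needs only a remark: then $r = +\infty$, and deleting $v$ separates its friends into distinct components, so the neighborhoods are trivially disjoint.
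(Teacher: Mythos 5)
Your proof is correct and takes essentially the same approach as the paper's: assume two of the $r$-neighborhoods in $G\setminus v$ intersect, splice the connecting paths with the edges $vu_i$ and $vu_j$ into a cycle of length at most $2r+2\leq g-1$, and contradict the definition of girth. Your additional care --- passing to a shortest $u_i$--$u_j$ path to guarantee a \emph{simple} cycle, checking both parities of $g$, and treating $g=+\infty$ separately --- only makes explicit what the paper's shorter argument leaves implicit.
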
	
\new{\begin{proof}
Let $u_1,u_2,\ldots, u_d$ be $v$'s friends with degrees  $\deg(u_i)\geq d$.
  Recall that $B_{r,G\setminus v}(u)$ denotes the $r$-neighborhood of $u$ in the network, where the agent $v$ was eliminated. We aim to select $r$ such that $B_{r,G\setminus v}(u_i)$ and $B_{r,G\setminus v}(u_j)$ are disjoint for $i\ne j$. If $B_{r,G\setminus v}(u_i)$ and $B_{r,G\setminus v}(u_j)$ intersect, this creates a cycle of length $2r+2$ (or shorter): start from $v$, then go to $u_i$, then to an agent in the intersection by the shortest path, then to $u_j$ again by the shortest path, and back to $v$. Hence, the intersection is possible only if $2r+2\geq g$. Thus choosing  $r$ such that $2r+2<g$, we ensure that the $r$-neighborhoods are disjoint;  $r=\Big\lfloor\frac{g-3}{2}\Big\rfloor$ is the maximal such $r$. We conclude that $v$ satisfies the local learning requirement with parameters~$\Big(d,\Big\lfloor\frac{g-3}{2}\Big\rfloor,D\Big)$.
\end{proof}}}
Recall that a network is called  \emph{$D$-regular} if all vertices have degree $D$.
Thanks to Lemma~\ref{lm_girth_and_llr} and Theorem~\ref{th_local_requirement}, proving Theorem~\ref{th_egalitarian} reduces to the question of the existence of symmetric $D$-regular networks with arbitrary high degree and girth. The existence of such networks is demonstrated in the theory of expanders; see Appendix~\ref{sect_expanders}. The following is the corollary of a theorem by \citet{lubotzky1988ramanujan} (Theorem~\ref{th_Ramanujan} in Appendix~\ref{sect_expanders}) that describes a family of so-called Ramanujan expanders.
\begin{corollary}[of Theorem~\ref{th_Ramanujan} by \citet{lubotzky1988ramanujan}]\label{cor_existence_high_girth}
	There is a sequence $D_k\to \infty$ such that for any $g_0$ and $k$, there exists a symmetric  $D_k$-regular network $G=(V,E)$  such that the girth $g\geq g_0$ and $|\lambda_2|\leq 2\sqrt{D_k-1}$, where $\lambda_2$ is the second-largest eigenvalue of the adjacency matrix.\footnote{We do not use the bound on $\lambda_2$ in Section~\ref{sec_egalitarian}; in particular, any $D$-regular network with large $D$ and  girth would suffice for the proof of Theorem~\ref{th_egalitarian}. The bound on $|\lambda_2|$ will be critical for the discussion on robust learning in Section~\ref{sec_robust}.}
\end{corollary}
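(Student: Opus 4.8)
The plan is to read the corollary off Theorem~\ref{th_Ramanujan}, the Lubotzky--Phillips--Sarnak (LPS) construction, by fixing the degree and letting the number of vertices grow. Recall that for suitable primes $p,q\equiv 1\pmod 4$ that theorem produces a graph $X^{p,q}$ that (i) is a Cayley graph on $PSL(2,\mathbb{Z}/q\mathbb{Z})$ (or $PGL(2,\mathbb{Z}/q\mathbb{Z})$), (ii) is $(p+1)$-regular, (iii) satisfies the Ramanujan bound $|\lambda|\leq 2\sqrt{p}$ for every nontrivial eigenvalue $\lambda$, and (iv) has girth bounded below by a quantity of order $\log_p q$. I would set $D_k=p_k+1$, with $p_k$ the $k$-th prime congruent to $1$ modulo $4$; then $D_k\to\infty$, which supplies the required sequence of degrees, and $2\sqrt{D_k-1}=2\sqrt{p_k}$ is exactly the LPS eigenvalue bound.

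Next I would verify the two structural properties. For symmetry, note that $X^{p,q}=\mathrm{Cay}(\Gamma,S)$ with a symmetric generating set $S=S^{-1}$, so every left translation $x\mapsto\gamma x$ is a graph automorphism and these translations act transitively on $\Gamma$; hence for any two vertices there is an automorphism carrying one to the other, which is exactly the definition of a symmetric network. For the eigenvalue bound, order the spectrum $D_k=\lambda_1\geq\lambda_2\geq\cdots$; since the graph is connected, the only eigenvalue equal to $D_k$ is the trivial $\lambda_1$, so the second-largest eigenvalue $\lambda_2$ is among the nontrivial ones and therefore satisfies $|\lambda_2|\leq 2\sqrt{p_k}=2\sqrt{D_k-1}$ by (iii). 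Regularity with degree $D_k$ is immediate from (ii).

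The only point that needs care --- and the crux of the argument --- is decoupling girth from degree: I must achieve \emph{arbitrarily large} girth while holding the degree fixed at $D_k$. This is precisely what the second parameter $q$ buys. For the fixed prime $p=p_k$, Dirichlet's theorem on primes in arithmetic progressions furnishes infinitely many primes $q\equiv 1\pmod 4$ satisfying the (congruence-type) side conditions of the LPS construction, and for these the girth estimate (iv) gives $g\geq c\log_p q$ for an absolute constant $c>0$. Since the right-hand side diverges as $q\to\infty$, I would choose $q$ large enough that $c\log_p q\geq g_0$, producing a symmetric $D_k$-regular graph of girth at least $g_0$. The main obstacle, then, is not any fresh estimate but confirming that the family genuinely permits $q$ to grow independently of $p$; this holds because the admissibility conditions on $(p,q)$ are purely congruential, leaving infinitely many valid $q$ for each fixed $p_k$.
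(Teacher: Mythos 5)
Your overall route is the same as the paper's: fix an admissible degree $D_k=p_k+1$ (so that $D_k-1$ is prime and $D_k-2$ is divisible by $4$), let the second parameter of the construction grow, and read off the three properties — regularity and the eigenvalue bound from the Ramanujan property, symmetry from vertex-transitivity of Cayley graphs, and arbitrarily large girth from a bound that is logarithmic in the number of vertices. The paper does exactly this, except that it works from the packaged statement of Theorem~\ref{th_Ramanujan} (girth at least $\frac{2}{3}\log_{D-1}|V|$ with $|V|\geq N$, $N$ arbitrary), so the corollary follows immediately by choosing $N$ with $\frac{2}{3}\log_{D_k-1}N\geq g_0$.

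There is, however, one genuine gap in your version, introduced by going back to the raw LPS construction: the bipartite case. The construction yields a non-bipartite Cayley graph of $PSL(2,\mathbb{Z}/q\mathbb{Z})$ when $p$ is a quadratic residue mod $q$, but a \emph{bipartite} Cayley graph of $PGL(2,\mathbb{Z}/q\mathbb{Z})$ when it is not — and your proposal explicitly allows both. For a bipartite $D_k$-regular graph, $-D_k$ is an eigenvalue. The paper orders eigenvalues by \emph{absolute value} (see Section~\ref{sec_robust} and Appendix~\ref{sect_expanders}), so in the bipartite case $|\lambda_2|=D_k$ and the claimed bound $|\lambda_2|\leq 2\sqrt{D_k-1}$ fails. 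Your argument hides this by ordering the spectrum by value, $\lambda_1\geq\lambda_2\geq\cdots$, and by treating every eigenvalue other than $D_k$ as "nontrivial" and hence covered by the Ramanujan bound; in the $PGL$ case the eigenvalue $-D_k$ is exempt from that bound. The distinction is not cosmetic here: the corollary's $|\lambda_2|$ bound is fed into the mixing lemma (Lemma~\ref{lm_mixing}) in the proof of Theorem~\ref{th_robust}, and that lemma becomes vacuous when $|\lambda_2|=D_k$. The fix is one line: since $p_k\equiv 1\pmod 4$, quadratic reciprocity gives $\left(\frac{p_k}{q}\right)=\left(\frac{q}{p_k}\right)$, so Dirichlet's theorem supplies infinitely many primes $q\equiv 1\pmod 4$ that are quadratic residues mod $p_k$; restricting to those $q$ forces the non-bipartite $PSL$ graph, after which the rest of your argument goes through verbatim.
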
	
After all these preliminaries, proving Theorem~\ref{th_egalitarian} becomes easy.
{\begin{proof}[Proof of Theorem~\ref{th_egalitarian}]
  For a $D$-regular network of girth~$g$, Theorem~\ref{th_local_requirement} combined with  Lemma~\ref{lm_girth_and_llr} implies a lower bound on the learning quality $l_\sigma(v)\geq 1-\delta\left(p,D,\Big\lfloor\frac{g-3}{2}\Big\rfloor,D\right)$ for any agent $v$ and any state-symmetric equilibrium $\sigma$. 

Corollary~\ref{cor_existence_high_girth} allows us to pick a symmetric $D$-regular network $G$ with arbitrary high degree $D$ and arbitrary high girth/degree ratio $\frac{g}{D}$, while Corollary~\ref{cor_delta_to_zero} implies that $\delta\left(p,D,\Big\lfloor\frac{g-3}{2}\Big\rfloor,D\right)$ tends to zero if both the girth $g$ and the degree $D$ tend to infinity and the girth goes to infinity faster: $\frac{g}{D}\to \infty$.
Hence, for any given $\delta$ and $p_0$,  we can choose $G$ to be a symmetric $D$-regular network with $D$ and $g$ such that $\delta\left(p_0, D, \Big\lfloor\frac{g-3}{2}\Big\rfloor, D\right)\leq\delta$. Since, $\delta$ is decreasing in $p$, we obtain
$ \delta\left(p,D,\Big\lfloor\frac{g-3}{2}\Big\rfloor,D\right)\leq \delta\left(p_0,D,\Big\lfloor\frac{g-3}{2}\Big\rfloor,D\right)\leq \delta$ for any $p\geq p_0$.  Thus the network $G$ satisfies the statement of Theorem~\ref{th_egalitarian}.
\end{proof}	
}

{
\section{Implications: Robust learning}\label{sec_robust}
Here we discuss the learning quality of a network when some of the agents 
are uninterested or unavailable and, hence, do not participate in the learning process.
The celebrity graphs  (Example~\ref{ex_celebrity}) demonstrate that  high learning quality can be very fragile: if celebrities  (a negligible minority of the population) do not show up, this leaves the network totally disconnected, and information aggregation breaks down. We show that there are networks that \emph{support robust learning}, i.e., are free of such a flaw.

In Section~\ref{sec_egalitarian}, we saw that symmetric sparse networks that have high degrees but not short cycles support learning. In such networks, all agents play the same role, which makes it  natural to expect that no \emph{small group} of agents is critical in these networks; in particular, adversarial elimination of a small group cannot dramatically spoil the learning outcome.
Here we obtain the surprising, much stronger result demonstrating that there are symmetric networks where learning is robust to the adversarial elimination of any group, even a large one.
}
\begin{theorem}\label{th_robust}
	For any $p_0>\frac{1}{2}$ and any $\delta>0$ there exists a symmetric network	$G=(V,E)$ such that for any $\alpha \in (0,1]$ and any subset  $V'\subset V$ with $\big\lceil \alpha\cdot|V|\big\rceil$ agents, the learning quality in the induced subnetwork $G^{V'}$  is at least $1-\frac{\delta}{\alpha^3}$ for any state-symmetric equilibrium and any probability of the wrong signal\footnote{{Similarly to Sections~\ref{sec_local_structure} and~\ref{sec_egalitarian}, the result extends to non-state-symmetric equilibria under the requirement $p_0>\frac{3}{4}$.}} $p>p_0$.
\end{theorem}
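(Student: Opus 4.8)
The plan is to reduce robust learning to a purely combinatorial statement: after an adversary deletes a $(1-\alpha)$-fraction of the vertices of a suitable symmetric Ramanujan expander, almost every surviving vertex still satisfies the local learning requirement (Definition~\ref{def_loc_requirement}) \emph{inside} the induced subnetwork $G^{V'}$. As in Section~\ref{sec_egalitarian}, the network $G$ will be a symmetric $D$-regular graph of large girth $g$ supplied by Corollary~\ref{cor_existence_high_girth}; the new ingredient is the spectral bound $|\lambda_2|\le 2\sqrt{D-1}$, which I feed into the expander mixing lemma to control degrees after an \emph{adversarial} deletion. A preliminary reduction disposes of small $\alpha$: in any equilibrium each agent secures expected payoff $p$ by ignoring observed actions and following his own signal, so $l_\sigma(v)\ge p>p_0>\tfrac12$ and hence $L_\sigma(G^{V'})\ge p_0$. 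Consequently, whenever $\delta/\alpha^3\ge\tfrac12$, i.e. $\alpha\le(2\delta)^{1/3}=:\alpha_0$, the desired inequality holds trivially. It therefore suffices to treat $\alpha\ge\alpha_0$, a constant depending only on $\delta$; this lower bound on $\alpha$ is exactly what keeps the surviving degrees, which are of order $\alpha D$, bounded away from zero.

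For the combinatorial core, fix $V'$ with $|V'|=\lceil\alpha n\rceil$, $n=|V|$, and call a surviving vertex \emph{good} if it has at least $d$ friends in $G^{V'}$ each of degree at least $d$ in $G^{V'}$. By Lemma~\ref{lm_girth_and_llr} applied to $G^{V'}$ (whose girth is at least $g$ and whose maximal degree is at most $D$), a good vertex satisfies the local learning requirement with parameters $(d,\lfloor(g-3)/2\rfloor,D)$, so Theorem~\ref{th_local_requirement} gives $l_\sigma(v)\ge 1-\delta_0$ with $\delta_0=\delta(p_0,d,\lfloor(g-3)/2\rfloor,D)$ (using that $\delta$ is decreasing in $p$). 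I bound the bad vertices by two applications of the mixing lemma. Let $\tilde B=\{v\in V':\deg_{G^{V'}}(v)<2d\}$; each $v\in\tilde B$ sends more than $D-2d$ edges into $V\setminus V'$, so comparing the lower bound $|\tilde B|(D-2d)$ on $e(\tilde B,V\setminus V')$ with the mixing-lemma upper bound $\tfrac{D}{n}|\tilde B||V\setminus V'|+\lambda_2\sqrt{|\tilde B||V\setminus V'|}$ yields $|\tilde B|(D\alpha-2d)\le \lambda_2\sqrt{|\tilde B|(1-\alpha)n}$. Choosing $D\ge 4d/\alpha_0$ makes $D\alpha-2d\ge D\alpha/2$ and, with $\lambda_2^2\le 4D$, gives $|\tilde B|/|V'|\le 16/(D\alpha^3)$. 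Next let $B_2$ be the set of $v\in V'$ with at least $d$ friends of degree $<d$ in $G^{V'}$; all such friends lie in $\tilde B$, and the edges incident to $\tilde B$ number at most $D|\tilde B|$, so $d|B_2|\le D|\tilde B|$ and $|B_2|/|V'|\le 16/(d\alpha^3)$. Any vertex outside $\tilde B\cup B_2$ has degree at least $2d$ and fewer than $d$ friends of degree $<d$, hence at least $d$ friends of degree $\ge d$, i.e. is good. Thus the bad fraction is at most $\tfrac{16}{\alpha^3}\bigl(\tfrac1D+\tfrac1d\bigr)$.

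To assemble the estimate, use $l_\sigma(v)\ge 1-\delta_0$ on good vertices and $l_\sigma(v)\ge 0$ elsewhere:
\[
1-L_\sigma(G^{V'})\le \delta_0+\frac{16}{\alpha^3}\Bigl(\frac1D+\frac1d\Bigr)\le\frac{1}{\alpha^3}\Bigl[\delta_0+16\Bigl(\frac1D+\frac1d\Bigr)\Bigr],
\]
where the last step uses $\alpha\le 1$. It remains to choose $d,D,g$ so that $\delta_0+16(1/D+1/d)\le\delta$. I would first pick $d$ large (depending only on $p_0,\delta$) so that $\tfrac{36}{\sqrt{d-1}(2p_0-1)}\le\delta/4$ and $16/d\le\delta/4$; then take a symmetric $D$-regular Ramanujan graph with $D\ge\max(4d/\alpha_0,\,64/\delta)$ (so also $16/D\le\delta/4$); finally, via Corollary~\ref{cor_existence_high_girth}, enlarge the girth so that $r=\lfloor(g-3)/2\rfloor$ forces $\psi=2d(eD/r)^r\le\min\{(2p_0-1)/2,\ \delta/4\}$. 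Then $\psi\le (2p_0-1)/2$ keeps the denominator in~\eqref{eq_local_quality_th_delta_def} at least $(2p_0-1)/2>0$, and Corollary~\ref{cor_delta_to_zero} gives $\delta_0\le\delta/2$, so the bracket is at most $\delta$ and $L_\sigma(G^{V'})\ge 1-\delta/\alpha^3$ for all $\alpha\ge\alpha_0$ and all admissible $V'$.

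I expect the spectral–combinatorial step to be the main obstacle: one must extract from the optimal Ramanujan gap, via the mixing lemma, that \emph{no} adversarial deletion can produce more than an $O(1/(D\alpha^3))$-fraction of low-degree survivors, nor more than an $O(1/(d\alpha^3))$-fraction of survivors starved of high-degree friends. This is precisely where regularity, large girth, and the sharp eigenvalue bound must be used in concert, and it is the source of the $\alpha^{-3}$ appearing in the statement of Theorem~\ref{th_robust}; the remaining work (the reduction for small $\alpha$ and the parameter bookkeeping) is routine by comparison.
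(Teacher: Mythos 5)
Your proposal is correct and follows essentially the same route as the paper's proof: a symmetric high-girth Ramanujan graph from Corollary~\ref{cor_existence_high_girth}, the mixing lemma (Lemma~\ref{lm_mixing}) applied to the low-degree survivors versus the deleted set to bound the fraction of survivors with small degree in $G^{V'}$, a degree-counting step to pass to survivors with many high-degree friends (your bound $d|B_2|\le D|\tilde B|$ is exactly the paper's Lemma~\ref{lm_friends_with_high_degree}, not a second mixing-lemma application as you describe it), and finally Lemma~\ref{lm_girth_and_llr} plus Theorem~\ref{th_local_requirement} on the good vertices. The only differences are organizational: the paper factors the argument through Lemma~\ref{lm_robust} (a bound valid for any $D$-regular graph in terms of $|\lambda_2|$) using an $\alpha$-dependent degree threshold $\gamma D=\frac{2\alpha}{3}D$ and a multiplicative error term, whereas you fix an absolute threshold $d$ depending only on $(p_0,\delta)$ and compensate with the clean preliminary reduction to $\alpha\ge\alpha_0$ via $l_\sigma(v)\ge p$ (the paper instead lets its bound trivialize for $\alpha<\frac{12}{D}$), yielding a simpler additive assembly of the error terms.
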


Robustness of learning turns out to be related to spectral properties of the network. This is captured by Lemma~\ref{lm_robust} below, and Theorem~\ref{th_robust} easily follows from this lemma combined with known results on expander graphs. Consider a $D$-regular network and denote by $|\lambda_1|\geq |\lambda_2|\geq...\geq |\lambda_{|V|}|$ the eigenvalues of its adjacency matrix ordered by their absolute values. Expanders are networks with small $|\lambda_2|$ relative to $|\lambda_1|$; see Appendix~\ref{sect_expanders}.

{The next lemma bounds by how much the learning quality can decrease if, instead of the original $D$-regular network, we consider its arbitrary subnetwork with $\lceil \alpha\cdot |V|\rceil$  agents.}
\begin{lemma}\label{lm_robust}
	Let $G=(V,E)$  be a $D$-regular network  of girth $g$ with the second-largest eigenvalue $\lambda_2$. Then
	for any $\alpha\in \left(0, 1\right]$ and any subset $V'\subset V$ of size $|V'|=\big\lceil \alpha\cdot|V|\big\rceil$,  the learning quality in the induced subnetwork $G^{V'}$ satisfies
	\begin{equation}\label{eq_average_quality_subexpander}
	L_{\sigma'}\left(G^{V'}\right)\geq 1-\left(\frac{2}{\sqrt{\alpha} }+\frac{(1-\alpha)|\lambda_2|^2}{ \alpha^3 \cdot D^{\frac{3}{2}}}\right)\cdot {\delta\left(p,D,\Big\lfloor\frac{g-3}{2}\Big\rfloor,D\right)}
	\end{equation}
	 for any state-symmetric equilibrium $\sigma'$. Here $\delta(p, d,r,D)$ is given by formula~\eqref{eq_local_quality_th_delta_def}.
\end{lemma}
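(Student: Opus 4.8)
The plan is to reduce the bound on the \emph{average} learning quality of the subnetwork $G^{V'}$ to the \emph{pointwise} guarantee of Theorem~\ref{th_local_requirement}, and then to control the resulting sum over vertices by the spectral gap. First I would observe that passing from $G$ to the induced subnetwork $G^{V'}$ can only destroy cycles and lower degrees, so $G^{V'}$ has girth at least $g$ and maximal degree at most $D$. Hence Lemma~\ref{lm_girth_and_llr} applies \emph{inside} $G^{V'}$ with radius $r=\lfloor(g-3)/2\rfloor$: every $v\in V'$ satisfies the local learning requirement with parameters $(d_v,r,D)$, where $d_v$ is the largest integer such that $v$ has at least $d_v$ neighbors in $V'$ whose degree in $G^{V'}$ is at least $d_v$. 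Applying Theorem~\ref{th_local_requirement} in $G^{V'}$ then gives $1-l_{\sigma'}(v)\le\delta(p,d_v,r,D)$ for each $v\in V'$.

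Next I would factor the target quantity $\delta(p,D,r,D)$ out of every term. Inspecting~\eqref{eq_local_quality_th_delta_def} with $r$ and $D$ held fixed, one checks for $2\le d\le D$ the multiplicative estimate
\[
\delta(p,d,r,D)\ \le\ \sqrt{\tfrac{D-1}{d-1}}\;\delta(p,D,r,D),
\]
because $\psi=2d\,(eD/r)^{r}$ is increasing in $d$ (so the denominator $2p-1-\psi$ only shrinks as $d$ grows, while $\psi(d)\le\sqrt{(D-1)/(d-1)}\,\psi(D)$), and the explicit factor $1/\sqrt{d-1}$ improves. For the few vertices with $d_v=1$ I would instead use the trivial bound $1-l_{\sigma'}(v)\le 1$, which is itself dominated by $\sqrt{D}\,\delta(p,D,r,D)$ since $\delta(p,D,r,D)\ge 18/\big(\sqrt{D-1}\,(2p-1)\big)$. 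After this step the whole problem reduces to bounding the averaged sum $\frac{1}{|V'|}\sum_{v\in V'}\sqrt{(D-1)/(d_v-1)}$ by $\frac{2}{\sqrt\alpha}+\frac{(1-\alpha)|\lambda_2|^2}{\alpha^3 D^{3/2}}$.

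To control that sum I would use the degree sequence of $G^{V'}$ together with the spectral content of the expander estimates (Appendix~\ref{sect_expanders}). Let $A$ be the adjacency matrix of $G$ and decompose the indicator vector $\mathbf 1_{V'}=\alpha'\mathbf 1+w$ with $w\perp\mathbf 1$ and $\alpha'=|V'|/|V|\ge\alpha$. Since $A\mathbf 1=D\mathbf 1$, for $v\in V'$ one has $\deg_{G^{V'}}(v)=\alpha'D+(Aw)_v$, and because $w$ lies in the orthogonal complement of the top eigenvector, $\sum_{v\in V'}\big(\deg_{G^{V'}}(v)-\alpha'D\big)^2\le\|Aw\|^2\le|\lambda_2|^2\|w\|^2=|\lambda_2|^2\,\alpha'(1-\alpha')|V|$. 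This quantitative degree concentration lets me split $V'$ into \emph{typical} vertices, whose degree (and whose neighbors' degrees) stay a constant fraction of $\alpha'D$ so that $d_v$ is of order $\alpha D$ and each term is $O(1/\sqrt\alpha)$, yielding the $\frac{2}{\sqrt\alpha}$ contribution; and \emph{atypical} vertices with abnormally small degree, whose number the $\ell^2$ bound caps at $O\!\big(|\lambda_2|^2(1-\alpha)|V|/(\alpha D^2)\big)$ and each of which contributes at most $\sqrt{D}$. Dividing by $|V'|=\alpha'|V|$ then produces exactly the second, spectral term.

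The hard part is the translation in the last paragraph. The local learning parameter $d_v$ is a \emph{second-neighborhood} quantity: it depends on the degrees of $v$'s neighbors inside $V'$, not merely on $\deg_{G^{V'}}(v)$. Turning the first-order degree concentration into a lower bound on $d_v$ therefore requires bounding, for each $v$, how many of its neighbors are themselves atypically low-degree, which I expect to handle by a second edge-counting (expander-mixing) estimate applied to edges between $V'$ and the atypical set. It is precisely this extra step that loses a further factor of $1/\alpha$ and is responsible for the claimed $\alpha^{3}$ in the denominator rather than the naive $\alpha^{2}$ coming from the degree bound alone.
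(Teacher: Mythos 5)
Your proposal follows the same overall architecture as the paper's proof (spectral control of the degrees in $G^{V'}$, then Lemma~\ref{lm_girth_and_llr} plus Theorem~\ref{th_local_requirement} vertex by vertex, then monotonicity of $\delta$ in its second argument), and the parts you actually verify are sound: $G^{V'}$ inherits girth $\geq g$ and maximal degree $\leq D$; the estimate $\delta(p,d,r,D)\leq\sqrt{(D-1)/(d-1)}\,\delta(p,D,r,D)$ is correct; and your $\ell^2$ bound $\sum_{v\in V'}\big(\deg_{G^{V'}}(v)-\alpha' D\big)^2\leq|\lambda_2|^2\alpha'(1-\alpha')|V|$ is a legitimate substitute for the paper's application of the mixing lemma to the low-degree set and the eliminated set $V\setminus V'$. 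The genuine gap is exactly the step you flag as ``the hard part,'' and the tool you propose for it does not suffice. To lower-bound the second-neighborhood quantity $d_v$ you must show that few vertices have many neighbors in the atypical (low-degree) set $B$, and a second mixing estimate between $B$ and the set $S$ of such vertices has the form $|E(S,B)|\leq\frac{D}{|V|}|S||B|+|\lambda_2|\sqrt{|S||B|}$, which gives $|S|\lesssim |\lambda_2|^2|B|/(\alpha D)^2$: you pay the spectral factor $|\lambda_2|^2/(\alpha D)^2$ a second time. Combined with $|B|/|V'|\lesssim(1-\alpha)|\lambda_2|^2/(\alpha^2D^2)$, the resulting bad-vertex fraction is of order $(1-\alpha)|\lambda_2|^4/(\alpha^4D^4)$, whereas the spectral term in~\eqref{eq_average_quality_subexpander} (of order $(1-\alpha)|\lambda_2|^2/(\alpha^3D^2)$, using $\delta(p,D,r,D)\geq 18/\sqrt D$) can absorb this only if the extra factor $|\lambda_2|^2/(\alpha D^2)$ is $O(1)$. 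The lemma imposes no hypothesis on $\lambda_2$, and its conclusion remains nontrivial in regimes (e.g.\ $\alpha$ close to $1$ and $|\lambda_2|$ a moderate constant fraction of $D$) where that extra factor is large; so the proof as sketched cannot establish the statement in the claimed generality.

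The paper closes this step with a purely combinatorial, spectral-free double-counting argument (Lemma~\ref{lm_friends_with_high_degree}): if at most a $\beta$-fraction of vertices of $G^{V'}$ have degree below $\gamma D$, then at most a $2\beta$-fraction have fewer than $\frac{\gamma}{2}D$ neighbors of degree at least $\frac{\gamma}{2}D$, since a vertex of degree below $\frac{\gamma}{2}D$ can serve as a low-degree neighbor of at most $\frac{\gamma}{2}D$ others. The second-neighborhood step therefore costs only a factor of $2$, not a second spectral factor; replacing your second mixing estimate with this argument repairs the proof. Relatedly, your closing diagnosis of the exponent is incorrect: the $\alpha^3$ is not produced by the second-neighborhood step. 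In the paper it arises from converting $|V|$ into $|V'|$ (one factor $1/\alpha$) together with the degree-threshold choice $\gamma=\frac{2\alpha}{3}$ in the single mixing-lemma application (a factor $1/(\alpha-\gamma)^2=9/\alpha^2$), after which the constant $9$ and the change from $D^2$ to $D^{3/2}$ are absorbed via $\delta(p,D,r,D)>18/\sqrt D$. Two smaller omissions in the same spirit: obtaining exactly $2/\sqrt\alpha$ for the typical vertices requires the arithmetic $\frac{\alpha}{3}-\frac{1}{D}\geq\frac{\alpha}{4}$, and the complementary case $\alpha<12/D$ must be treated separately (the paper notes the bound is then vacuous).
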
	
Lemma~\ref{lm_robust} is proved in Appendix~\ref{appendix_missed_proofs_robust}; here we present the main idea and then prove Theorem~\ref{th_robust}.
 \begin{proof}[Sketch of the proof of Lemma~\ref{lm_robust}.]
{The key tool is the mixing lemma from the theory of expanders (Lemma~\ref{lm_mixing}). This lemma is applicable to any $D$-regular network and provides a bound in terms of $|\lambda_2|$ on how much the number of edges $|E(V^1,V^2)|$ between any two disjoint subsets of vertices $V_1,\ V_2$ deviates from the expected number of edges in the Erdos\textendash Renyi model:
\begin{equation}\label{eq_mixing}
\left|  \big|E(V^1,V^2)\big| -\frac{D}{|V|}\cdot|V^1|\cdot|V^2| \right|\leq |\lambda_2|\sqrt{|V^1||V^2|}.
\end{equation}
This lemma allows us to bound the fraction of agents with low degree in $G^{V'}$. Indeed, for $\gamma<\alpha$, let $V^1$ be the set of agents with degrees less than $\gamma\cdot  D$ in $G^{V'}$ and $V^2$ be the set of eliminated agents $V\setminus V'$. Since each agent has degree $D$ in the original network $G$, there are at least $\left(1-\gamma\right)D|V^1|$ edges between $V^1$ and $V^2$. The Erdos\textendash Renyi model prescribes $(1-\alpha)D|V^1|$ edges in expectation. This discrepancy is compatible with~\eqref{eq_mixing} only for relatively small subsets $V^1$. We obtain that the fraction of agents in $G^{V'}$ with degree $\gamma \cdot D$ or higher is at least $1-\theta$, where $\theta=\theta(\alpha,D,\gamma,\lambda_2)$ is small.
}

By Lemma~\ref{lm_friends_with_high_degree} from the appendix, the lower  bound on the fraction of high-degree agents implies a bound on the number of agents having a large number of high-degree friends: the fraction of agents  having at least $\frac{\gamma}{2}D$ friends with degrees $\frac{\gamma}{2}D$ and higher is above $1-2\theta$. {By Lemma~\ref{lm_girth_and_llr}, each such agent satisfies the local learning requirement with parameters $\Big(\frac{\gamma}{2}D, \Big\lfloor\frac{g-3}{2}\Big\rfloor ,\frac{\gamma}{2}D\Big)$. Application of Theorem~\ref{th_local_requirement} to this set of agents completes the proof.}
\end{proof}	
\begin{proof}[Proof of Theorem~\ref{th_robust}]
	{As in the proof of Theorem~\ref{th_egalitarian}, we can pick
	the Ramanujan expander {such that $\delta\left(p_0,D,\Big\lfloor\frac{g-3}{2}\Big\rfloor,D\right)$  is at most $\frac{\delta}{6}$.} The second eigenvalue of the Ramanujan expander satisfies $|\lambda_2| \leq 2\sqrt{D-1}\leq 2\sqrt{D}$ (Corollary~\ref{cor_existence_high_girth}) and, hence, by Lemma~\ref{lm_robust},
	any subnetwork $G^{V'}$ with $|V'| =\big\lceil  \alpha\cdot|V| \big\rceil$ has  learning quality  at least $$1-\left(\frac{2}{\alpha}+\frac{4(1-\alpha)}{\alpha^3\sqrt{D}}\right)\cdot\delta\left(p,D,\Big\lfloor\frac{g-3}{2}\Big\rfloor,D\right).$$
	Since $\frac{2}{\alpha}+\frac{4(1-\alpha)}{\alpha^3\sqrt{D}}\leq \frac{2}{\alpha^3}+\frac{4}{\alpha^3}\leq \frac{6}{\alpha^3}$ and $\delta\left(p,D,\Big\lfloor\frac{g-3}{2}\Big\rfloor,D\right)\leq \delta\left(p_0,D,\Big\lfloor\frac{g-3}{2}\Big\rfloor,D\right)$, the learning quality is bounded from below by $1-\frac{\delta}{\alpha^3}$.}
\end{proof}
{
\begin{remark}[Randomized robustness]
	Instead of robustness with respect to adversarial elimination, one can consider a weaker notion: a subset of agents is deleted at random, and the resulting network has to support learning with high probability with respect to the choice of this subset.
	
	In Appendix~\ref{app_randomized_robustness}, we show that, surprisingly, any network that supports learning has the property of randomized robustness. For example, in the celebrity graphs, if we eliminate a large set of agents randomly, say $50\%$, around $50\%$ of the celebrities will remain in the network, thus ensuring information aggregation (with slightly lower learning quality).

	More formally, for a network $G=(V,E)$ consider an induced subnetwork $G^{V'}$, where $V'$ is a subset of $\lceil\alpha\cdot |V|\rceil$ agents taken uniformly at random. Under an additional technical assumption, we demonstrate  that $L(G^{V'})\geq 1- \sqrt{\frac{1-L(G)}{\alpha}}$ with probability at least $1- \sqrt{\frac{1-L(G)}{\alpha}}$ with respect to the choice of $V'$ (Theorem~\ref{th_random_elimination}); here  $L(G)=\max_\sigma L_\sigma(G)$ denotes the learning quality for the best equilibrium.
	
	Randomized robustness holds because, in a sense, it is hardwired in the definition of learning quality with a random arrival order. Indeed, if a network aggregates information well for most  arrival orders, then the subnetwork of the first $\lceil\alpha\cdot |V|\rceil$ arrivals must also aggregate information well for most such networks. The proof of  Theorem~\ref{th_random_elimination} is based on this coupling between the choice of the subset $V'$ and the learning process in the original network.
\end{remark}	
}


\section{Discussion}\label{sec_discuss}

{In Section~\ref{subsect_interp}, we offer two alternative ways to perceive the results on learning in symmetric networks (Section~\ref{sec_egalitarian}) and those on robust learning (Section~\ref{sec_robust}). First, we present the network-designer perspective and then explain the connection to the classic sociological theory of the two-step information flow.
Section~\ref{subsec_getting_rid} discusses our technical assumptions and how to relax them.}

\subsection{Alternative interpretations of results}\label{subsect_interp}
\paragraph{Network-designer perspective} 


Consider a  designer of a social network who wants to ensure social learning. \citet{sgroi2002} constructs a network that supports learning for a particular arrival order, and \citet{bahar2019celeb} propose the celebrity graphs as  networks that support learning and are robust to the arrival order of agents. However, the celebrity graphs only support learning if {\em all} agents actually participate and make a decision.  In other words, if a minority of agents do not care about the decision at hand and are consequently inactive, then learning may fail. By contrast, our expander-based networks from Theorem~\ref{th_robust} give a  social structure that supports learning and is robust both to the arrival order and to the actual subset of active agents.

{In practice, social networks are not designed from scratch and the social connections can be considered as given. However, social networks on the Internet are usually combined with a recommendation system that decides which news to show to a particular user. In so doing, it can intervene in the structure of social connections by concealing some friends' news and possibly showing some news of non-friends. Our local learning requirement (Theorem~\ref{th_local_requirement}) combined with Lemma~\ref{lm_girth_and_llr} suggests a possible recipe for improving overall learning quality: eliminating short cycles while keeping high the number of information sources an agent is subject to. This proposal, however, requires additional empirical evaluation.}
\fed{We can mention fake news  here. Should we?}

\paragraph{The paradigm of the two-step information flow}

The classic sociological theory of the ``two-step information flow'' by \citet{katz1955personal} conveys the idea that learning is always facilitated by a small group of \ed{opinion leaders, i.e., influential agents  predetermined by the network structure.} The original version of the theory, formulated during the golden era of TV networks, argues that although such networks were responsible for sparking new ideas and introducing new products, this content was consumed by most people in an indirect manner. In other words, people's actions, whether in the form of adopting a new product or voting for a certain political candidate, were not so much a result of what they heard from the TV networks but rather what they heard from influential agents who, for their part, had consumed such content directly.

{Let us call a group of agents \emph{influential} if the network supports learning whenever this group is present and fails to aggregate the information when it is not. An example of an influential minority is the group of celebrities in the celebrity graphs of Example~\ref{ex_celebrity}.

The theory of the two-step information flow suggests that networks  supporting learning contain an influential minority, which plays the role of intermediary in the information spread over the network. The results of Sections~\ref{sec_egalitarian} and~\ref{sec_robust} challenge this thesis (in our stylized theoretical model): high learning quality can be achieved in a-priori symmetric networks and, hence, no group of agents is predetermined by the network structure; moreover, there are symmetric networks where no minority (or even majority) of agents is influential.}
\fed{Do we want to add anything?}

\subsection{Getting rid of technical assumptions}\label{subsec_getting_rid}
Some of our modeling assumptions were made to simplify the exposition {and can be easily relaxed.} 
\paragraph{General distributions of arrival times} We assumed that agents' arrival times are i.i.d. and, in particular, uniformly distributed on the unit interval. However, any non-atomic distribution leads to an equivalent model (equivalence is obtained by a monotone reparameterization), as long as the i.i.d. assumption is maintained. An important robustness result would be to extend our conclusions to some approximate notion of i.i.d. as some local dependence among agents is a realistic assumption.

\paragraph{Non-binary signals}	
	Our results hold for any non-binary signaling device as long as signals are informative and symmetry is maintained. By informativeness we mean that for a positive probability set of signals $s$, the posteriors $\P(\theta=1\mid s)$ belong to the union of intervals $[0,1-p]\cup [p,1]$ (the probability of this set of signals will enter into the bound on the learning quality). By ``symmetry'' we mean that the distribution posteriors $\P(\theta=1\mid s)$  is symmetric around $\frac{1}{2}$. Note that signals of unbounded precision are also allowed. 
 \paragraph{Heterogeneous agents}
 {Agents can be heterogeneous, i.e., the signaling device can be agent-specific, and so each agent $v$ has his own signal precision $p_v$. In this case, all the results hold with $p=\min_{v\in V}p_v$.}
 \paragraph{Non-symmetric equilibria}
 {The symmetry assumption on equilibria and signals can  be relaxed. However, for asymmetric equilibria, Theorems~\ref{th_local_requirement}, \ref{th_egalitarian}, and~\ref{th_robust} require the probability $p$ of the correct signal to be above $\frac{3}{4}$ (instead of $\frac{1}{2}$). The reason for this is  inequality~\eqref{eq_conditional_mistake_bound}, where for asymmetric equilibria we get $2(1-p)$ in the parentheses.}
\paragraph{General states} Extension to non-equiprobable states is straightforward. Note, however, that this breaks the state symmetry of equilibria and, hence, we get $2(1-p)$ in inequality~\eqref{eq_conditional_mistake_bound} (see the comment above). Extension to a non-binary state does not lead to any additional technical difficulties.
\paragraph{More informed agents} The results of Sections~\ref{sec_local_structure}, \ref{sec_egalitarian}, and~\ref{sec_robust} hold  if
in addition to observing the actions of his friends, each agent $v$
gets some information about his realized subnetwork $G_{v,T}$, e.g., a possibly noisy signal about the set of agents, their arrival times, actions, and even their private signals.
 \paragraph{Less informed agents}  {All our results  
 	can be easily adapted to the case, where observation of friends' actions is noisy. Namely, each agent $v$, instead of observing the action $a_u$ of his friend $u\in F_{v,T}$, observes either $a_u$ with probability $1-\varepsilon$ or the flipped action $1-a_u$ with probability $\varepsilon$. One can consider two variants of this model: the action of $u$ is flipped for all his neighbors at the same time but independently across $u\in V$, or the action is flipped independently for each pair $(v,u)$.
 	
 	Both variants require the same straightforward modifications in Sections~\ref{sec_local_structure}, \ref{sec_egalitarian}, and~\ref{sec_robust}. They originate 
 	from a minor adjustment in the proof of Theorem~\ref{th_local_requirement}: when applying the Hoeffding inequality in formula~\eqref{eq_Hoeffding_proof_th_local}, instead of the upper bound on $\P(a_u\ne \theta)$, we will need an upper bound on the probability that the action \emph{observed by $v$} does not match the state. The latter probability is equal to $(1-\varepsilon)\P(a_u\ne \theta)+\varepsilon(1-\P(a_u\ne \theta))$ and does not exceed  $(1-\varepsilon)\P(a_u\ne \theta)+\varepsilon$, where $\P(a_u\ne \theta)$ can be bounded by Lemma~\ref{lm_upper_bound_mistake} as before.	
}

\addcontentsline{toc}{section}{\protect\numberline{}References}%

\bibliographystyle{plainnat}

\bibliography{bibliography_robust}

\appendix

\section{Expanders}\label{sect_expanders}



There are several equivalent definitions of expanders; see~\citet{hoory2006expander}. The ``spectral'' definition is the most convenient for our needs. Consider a $D$-regular ($\deg(v)=D$ for all vertices) graph $G=(V,E)$ and denote by $(\lambda_k)_{k=1,\dots,|V|}$  eigenvalues of its adjacency matrix ordered by absolute values $|\lambda_1|\geq|\lambda_2|\geq|\lambda_3|....\geq |\lambda_{|V|}|$. The eigenvalue $\lambda_1=D$ corresponds to the eigenvector representing the uniform distribution over vertices: the smaller  the second eigenvalue  $|\lambda_2|$ is, the faster the distribution of a random walk started at some vertex converges to the uniform distribution; see Proposition~1.6. in~\citet{lubotzky2012expander}. A graph $G$ is an \emph{expander} if $|\lambda_2|$ is small relative to $|\lambda_1|$; i.e., a random walk on $G$ forgets the starting point fast.
\begin{theorem}[\citet{lubotzky1988ramanujan}]\label{th_Ramanujan}
	For any $N$ and $D$ such that $D-1$ is a prime number and $D-2$ is divisible by $4$, there exists a $D$-regular symmetric  graph $G=(V,E)$   with at least $N$ vertices, $|\lambda_2|\leq 2\sqrt{D-1}$, and girth $g\geq \frac{2}{3}\log_{D-1} |V|$.
\end{theorem}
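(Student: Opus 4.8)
The plan is to follow the explicit arithmetic construction of \citet{lubotzky1988ramanujan}, which realizes the required graphs as Cayley graphs of matrix groups over finite fields. Set $p=D-1$; the hypotheses guarantee that $p$ is prime and $p\equiv 1\pmod 4$. First I would manufacture $D=p+1$ distinguished generators from the arithmetic of quaternions. By Jacobi's four-square theorem there are exactly $8(p+1)$ integer solutions of $a_0^2+a_1^2+a_2^2+a_3^2=p$; a standard normalization (for instance $a_0>0$ odd and $a_1,a_2,a_3$ even) singles out exactly $p+1$ integral quaternions $\alpha_1,\dots,\alpha_{p+1}$ of norm $p$. Since $p$ is not a perfect square, none of these equals its own conjugate, so they fall into $(p+1)/2$ genuine conjugate pairs $\{\alpha,\bar\alpha\}$.

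Next I would fix a large auxiliary prime $q\equiv 1\pmod 4$ and reduce these quaternions modulo $q$. Under the standard isomorphism of the quaternion algebra over $\mathbb{F}_q$ with $2\times 2$ matrices, the $\alpha_i$ map to a generating set $S=\{g_1,\dots,g_{p+1}\}$ of $\Gamma=\mathrm{PSL}_2(\mathbb{F}_q)$ or $\mathrm{PGL}_2(\mathbb{F}_q)$, the choice depending on whether $p$ is a quadratic residue modulo $q$. The set $S$ is symmetric (closed under inversion) because $\bar\alpha$ reduces to $g^{-1}$. The graph $G=\mathrm{Cay}(\Gamma,S)$ is then $D$-regular, and it is symmetric in the sense of the paper because every Cayley graph is vertex-transitive. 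Since $|\Gamma|$ grows like $q^3$, choosing $q$ large enough gives $|V|\geq N$, settling the size requirement.

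The girth bound is number-theoretic: a cycle of length $k$ corresponds to a reduced word in the $g_i$ equal to the identity of $\Gamma$, and lifting such a word to the quaternions produces an element of norm $p^k$ whose triviality modulo $q$ forces it to be central mod $q$. A short-word analysis shows this is impossible unless $p^k$ is of order at least $q^2$, whence $k\geq \tfrac23\log_{p}q\approx \tfrac23\log_{D-1}|V|$ using $|V|\asymp q^3$. The crux, and the genuinely deep ingredient, is the spectral bound $|\lambda_2|\leq 2\sqrt{D-1}$. Here one identifies the adjacency operator of $G$ with a Hecke operator $T_p$ on functions over $\Gamma\backslash\mathrm{PGL}_2$; via the Jacquet--Langlands correspondence these eigenvalues match Hecke eigenvalues of classical weight-two cusp forms, and the Ramanujan--Petersson bound for such forms — due to Eichler and ultimately resting on Deligne's proof of the Weil conjectures — yields $|\lambda|\leq 2\sqrt{p}$ for every eigenvalue other than the trivial $\lambda_1=D$. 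This last step is the main obstacle: it cannot be made elementary, which is precisely why the result is invoked here as a black box from \citet{lubotzky1988ramanujan} rather than reproved.
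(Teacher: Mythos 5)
Your proposal is correct and takes essentially the same route as the paper: the paper does not reprove this result but invokes the Cayley-graph construction of \citet{lubotzky1988ramanujan} as a black box, adding only the observation (which you also make) that symmetry is automatic because Cayley graphs are vertex-transitive, while the spectral bound $|\lambda_2|\leq 2\sqrt{D-1}$ rests on the Ramanujan--Petersson/Deligne input that neither you nor the paper attempts to reproduce. The lone blemish is a typo in your girth step: from $p^k\gtrsim q^2$ one gets $k\geq 2\log_p q$, which is roughly $\tfrac{2}{3}\log_p|V|$ since $|V|\asymp q^3$, so your intermediate expression ``$\tfrac{2}{3}\log_p q$'' should read $2\log_p q$; the conclusion is unaffected.
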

Graphs with $|\lambda_2|\leq 2\sqrt{D-1}$ are called Ramanujan graphs; by the Alon\textendash Boppana theorem (see Theorem~5.3 in~\citet{hoory2006expander}), this value of $\lambda_2$ is essentially the best possible. It is quite intuitive that the best expanders cannot have short cycles, which lead to recurrences in the random walk and, hence, slow down the expansion of the random walk over the graph.

The proof of Theorem~\ref{th_Ramanujan} is quite technical and relies on group theory. \citet{lubotzky1988ramanujan} and  their successors (e.g., \citet{morgenstern1994existence}, who relaxed the condition  of divisibility by $4$, and~\citet{dahan2014regular}, who extended the result to arbitrary $D\geq 11$) construct $G$ as a Cayley graph of a certain group.\footnote{Given a group $\mathcal{G}$ with a group operation $\circ$ and a subset $S\subset \mathcal{G}$, its Cayley graph is defined in the following way: the set of vertices $V$ coincides with $\mathcal{G}$  and  $vu\in E$ if $u=v\circ s$ for some $s\in S$.} While the symmetry of the constructed graph is not mentioned explicitly in these papers, it comes for  free because any Cayley graph is symmetric; see Claim~11.4 in \citep{hoory2006expander}.\footnote{Indeed, for any $x\in \mathcal{G}$  the map \ed{$f(v):\, v\to x\circ v$ is an automorphism of the Cayley graph. For any pair of vertices $v,v'\in V$, we achieve $f(v)=v'$ by choosing $x=v'\circ v^{-1}$.}  }

In addition to large girth, 
we need another property of expanders demonstrating their similarity to random graphs in the Erdos\textendash Renyi model with the probability $p$ of an edge between two given vertices  equal to $\frac{D}{|V|}$. For a pair of disjoint subsets $V^1,V^2\subset V$ denote by $E(V^1,V^2)$ the set of edges with one endpoint in $V^1$ and another in $V^2$. In the Erdos\textendash Renyi model the expected  number of such edges is equal to $p\cdot|V^1||V^2|$. The following result, known as the mixing lemma, shows that $\big|E(V^1,V^2)\big|$ for an expander is close to this number.
\begin{lemma}[Mixing lemma, \citet{alon1988explicit}] \label{lm_mixing}
	For any $D$-regular graph $G=(V,E) $ and any two disjoint subsets $V^1,V^2\subset V$ the following inequality holds:
	$$\left|  \big|E(V^1,V^2)\big| -\frac{D}{|V|}\cdot|V^1|\cdot|V^2| \right|\leq |\lambda_2|\sqrt{|V^1||V^2|}.$$
\end{lemma}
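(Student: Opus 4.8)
The plan is to work with the adjacency matrix $A$ of $G$ and the spectral decomposition that defines the eigenvalues $\lambda_k$. Write $n=|V|$. Since $A$ is real and symmetric, I would fix an orthonormal basis of eigenvectors $w_1,\ldots,w_n$ with $Aw_k=\lambda_k w_k$. The crucial consequence of $D$-regularity is that the normalized all-ones vector $w_1=\frac{1}{\sqrt n}\mathbf{1}$ is an eigenvector with eigenvalue $\lambda_1=D$; because every eigenvalue of a $D$-regular graph satisfies $|\lambda|\le D$, this $w_1$ is indeed the top eigenvector in the ordering by absolute value, so that $|\lambda_k|\le|\lambda_2|$ for all $k\ge 2$.

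Next I would express the quantity of interest as a bilinear form. Let $\mathbf{1}_{V^1},\mathbf{1}_{V^2}\in\R^n$ be the indicator vectors of the two disjoint sets. Since $V^1$ and $V^2$ are disjoint, counting edges between them gives exactly
\begin{equation*}
\big|E(V^1,V^2)\big|=\mathbf{1}_{V^1}^{\top}A\,\mathbf{1}_{V^2}.
\end{equation*}
Expanding $\mathbf{1}_{V^1}=\sum_k\alpha_k w_k$ and $\mathbf{1}_{V^2}=\sum_k\beta_k w_k$ in the eigenbasis, orthonormality yields $\mathbf{1}_{V^1}^{\top}A\,\mathbf{1}_{V^2}=\sum_k\lambda_k\alpha_k\beta_k$. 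The $k=1$ term is the \emph{main term}: since $\alpha_1=\langle\mathbf{1}_{V^1},w_1\rangle=|V^1|/\sqrt n$ and likewise $\beta_1=|V^2|/\sqrt n$, it equals $\lambda_1\alpha_1\beta_1=\frac{D}{n}|V^1||V^2|$, exactly the Erd\H{o}s\textendash R\'enyi expectation appearing in the lemma.

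It then remains to bound the tail $\sum_{k\ge 2}\lambda_k\alpha_k\beta_k$, which is precisely the left-hand side of the claimed inequality. Using $|\lambda_k|\le|\lambda_2|$ for $k\ge2$, followed by the Cauchy\textendash Schwarz inequality and Parseval's identity $\sum_k\alpha_k^2=\|\mathbf{1}_{V^1}\|^2=|V^1|$ (and similarly for $\beta$), I would estimate
\begin{equation*}
\Big|\sum_{k\ge 2}\lambda_k\alpha_k\beta_k\Big|\le|\lambda_2|\sum_{k\ge2}|\alpha_k|\,|\beta_k|\le|\lambda_2|\sqrt{\textstyle\sum_{k\ge2}\alpha_k^2}\,\sqrt{\textstyle\sum_{k\ge2}\beta_k^2}\le|\lambda_2|\sqrt{|V^1||V^2|},
\end{equation*}
which is the desired bound. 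I do not expect a genuine obstacle here: the only point that must be handled with care is the identification of the main-term eigenvector as the \emph{uniform} vector $w_1=\frac{1}{\sqrt n}\mathbf{1}$ rather than merely ``the eigenvector of largest eigenvalue,'' since it is exactly $D$-regularity that makes these coincide and that produces the clean factor $\frac{D}{n}$ in the main term; everything else is the standard spectral-plus-Cauchy\textendash Schwarz estimate.
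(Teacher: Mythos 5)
Your proof is correct and complete. Note that the paper does not prove Lemma~\ref{lm_mixing} at all — it imports the result by citation from \citet{alon1988explicit} — and your argument (extracting the main term $\frac{D}{|V|}|V^1||V^2|$ from the uniform eigenvector $w_1=\frac{1}{\sqrt{n}}\mathbf{1}$ guaranteed by $D$-regularity, then bounding the remaining spectral sum via $|\lambda_k|\le|\lambda_2|$ for $k\ge 2$ together with Cauchy\textendash Schwarz and Parseval) is precisely the standard spectral proof of the expander mixing lemma from that literature, and it delivers exactly the bound as stated.
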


\section{\new{Local nature of decisions}}\label{app_localization}
\ed{Here we prove Lemma~\ref{lm_localization_body}, which ensures  that the realized subnetwork $G_{v,T}$ of $v$ is likely to be contained in the neighborhood of $v$ whenever the radius of this neighborhood is large enough compared to the maximal degree. In fact, we prove a slightly more general statement with conditioning on $v$'s arrival time; this is a technical nuance that we will need in the proof of Theorem~\ref{th_local_requirement}.
\begin{lemma}\label{lm_localization_appendix}
	The probability  that the realized subnetwork of $v$ is contained in the $(r-1)$-neighborhood of $v$,  conditional on $v$'s arrival time $t_v$, enjoys the following lower bound:
	\begin{equation}\label{eq_localization_appendix}
	\P\Big(G_{v,T}\subset B_{r-1}(v)\mid  t_v \Big)\geq 1-2\cdot\left(\frac{e\cdot D}{r}\right)^{r},
	\end{equation}
	where   $D=\max_{u\in B_r(v)} \deg(u).$
\end{lemma}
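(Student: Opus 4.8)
The plan is to reduce the event $\{G_{v,T}\not\subset B_{r-1}(v)\}$ to the existence of a suitable realized path and then apply a union bound, exactly in the spirit of the sketch given for Lemma~\ref{lm_localization_body}, but keeping $t_v$ fixed throughout. First I would observe that $G_{v,T}\not\subset B_{r-1}(v)$ precisely when the realized network $G_T$ contains a directed path starting at $v$ and reaching some vertex at distance at least $r$ from $v$; truncating such a path at the first moment it leaves $B_{r-1}(v)$ yields a path $(v=u_0,u_1,\dots,u_k)$ with $u_0,\dots,u_{k-1}\in B_{r-1}(v)$, $u_k\in B_r(v)\setminus B_{r-1}(v)$, and length $k\ge r$. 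In particular all its vertices lie in $B_r(v)$, where every degree is at most $D$. Call such a path an \emph{escaping path}; the bad event is contained in the union, over all escaping paths $P$ in $G$, of the event $\{P\subset G_T\}$.

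Next I would compute, conditional on $t_v$, the probability that a fixed escaping path $P=(v=u_0,\dots,u_k)$ is realized. The orientation rule gives $P\subset G_T$ if and only if the arrival times strictly decrease along $P$, i.e.\ $t_v=t_{u_0}>t_{u_1}>\dots>t_{u_k}$. Since $t_{u_1},\dots,t_{u_k}$ are i.i.d.\ uniform on $[0,1]$ and independent of $t_v$, the probability that all of them fall below $t_v$ is $t_v^{\,k}$, and conditional on this the single decreasing order among the $k!$ equally likely orders has probability $1/k!$; hence
\[
\P\big(P\subset G_T\mid t_v\big)=\frac{t_v^{\,k}}{k!}\le\frac{1}{k!}.
\]
For the counting step, the number of escaping paths of length $k$ is at most $D^k$: starting from $v$, each of the $k$ steps chooses a neighbor of the current vertex, which lies in $B_r(v)$ and therefore has at most $D$ neighbors. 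The union bound then yields
\[
\P\big(G_{v,T}\not\subset B_{r-1}(v)\mid t_v\big)\le\sum_{k\ge r}D^k\cdot\frac{1}{k!}.
\]

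Finally I would estimate the tail sum, and this is the step I expect to be the main obstacle, since it is where the clean closed form $2(eD/r)^r$ must be recovered. Using $k!\ge(k/e)^k$ (which follows from $e^k\ge k^k/k!$), the leading term satisfies $D^r/r!\le(eD/r)^r$, while consecutive terms obey the ratio $\frac{D^{k+1}/(k+1)!}{D^k/k!}=\frac{D}{k+1}\le\frac{D}{r+1}$ for $k\ge r$. It suffices to prove the inequality when its right-hand side $2(eD/r)^r$ is below $1$, since otherwise it holds trivially; and in that regime $(eD/r)^r<\frac{1}{2}$ forces $eD/r<1$, hence $2D<r$ and the ratio above is strictly below $\frac{1}{2}$. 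Summing the resulting geometric majorant gives
\[
\sum_{k\ge r}\frac{D^k}{k!}\le\left(\frac{eD}{r}\right)^{r}\cdot\frac{r+1}{\,r+1-D\,}\le 2\left(\frac{eD}{r}\right)^{r},
\]
where the last inequality uses $2D\le r+1$. This is exactly the claimed bound~\eqref{eq_localization_appendix}. The $t_v$-conditioning, by contrast, causes no real trouble: it only changes the per-path probability from the unconditional $1/(k+1)!$ to $t_v^{\,k}/k!\le 1/k!$, which is absorbed harmlessly into the same estimate. I would remark that integrating~\eqref{eq_localization_appendix} over $t_v\in[0,1]$ immediately recovers the unconditional statement of Lemma~\ref{lm_localization_body}.
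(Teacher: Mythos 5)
Your proof is correct and follows essentially the same route as the paper's: reduce the bad event to the existence of a realized path from $v$ to the sphere $B_r(v)\setminus B_{r-1}(v)$, bound each path's probability by $1/k!$ (you compute the sharper exact value $t_v^k/k!$), count at most $D^k$ paths of length $k$, and control the tail $\sum_{k\ge r}D^k/k!$ by a geometric series to get $2(eD/r)^r$. The only differences are cosmetic: the paper dismisses the trivial regime via the assumption $r>eD$ and uses the ratio $D/r\le 1/e$ with $\frac{1}{1-1/e}\le 2$, where you use $D/(r+1)<\frac12$; both yield the same constant.
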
}
\begin{proof}[Proof of Lemma~\ref{lm_localization_appendix}.]
	\ed{Without loss of generality we can assume that $r>e\cdot D$ since otherwise  the bound~\eqref{eq_localization_appendix} trivializes.}

	\ed{The realized subnetwork of $v$  is contained in the $(r-1)$-neighborhood if and only if no path connecting $v$ and the sphere  $S_r(v)=B_r(v)\setminus B_{r-1}(v)$ in the a-priori network $G$
	exists in the realized network $G_T$.
	Consider an agent $u\in S_r(v)$ and let $(v=u_0,u_1,u_2,\ldots,u_k=u)$  be a path connecting  $v$ and $u$ in $G$.  The chance that this path is presented in the realized network $G_T$ is bounded by the probability that $u_k,u_{k-1},\ldots, u_{1}$ arrive exactly in this order (we excluded the agent $u_{0}=v$  since we  condition on his arrival time); this probability is equal to $\frac{1}{k!}$. Each path between $v$ and the sphere $S_r(v)$ has length $k\geq r$.	The total number of different paths of length $k$ starting from $v$ is at most  $D^{k}$ since there are at most $D$ options for each of $k$ steps. }

\ed{Therefore, by the union bound, the chance that there is a path between $v$ and $S_r(v)$ in the realized network $G_T$ is at most
$$\sum_{k=r}^\infty \frac{D^k}{k!}.$$
Since $\frac{D^k}{k!}\leq \frac{D^r}{r!}\cdot \left(\frac{D}{r}\right)^{k-r}$, we can bound the sum by the geometric progression
$$\sum_{k=r}^\infty \frac{D^k}{k!}\leq \frac{D^r}{r!} \sum_{l=0}^\infty  \left(\frac{D}{r}\right)^{l} = \frac{D^r}{r!}\cdot \frac{1}{1-\frac{D}{r}}\leq \frac{1}{1-\frac{1}{e}}\cdot \frac{D^r}{r!}\leq 2\cdot  \frac{D^r}{r!},$$
where the last two inequalities follow from the assumption $r\geq e\cdot D$ and the fact that $\frac{1}{1-\frac{1}{e}}\leq 2$. 
Using the inequality $n!\geq \left(\frac{n}{e}\right)^n$, we can get rid of the factorial in the denominator: 
$$2\cdot \frac{\big(D\big)^{r}}{r!}\leq 2\cdot\left(\frac{e\cdot D}{r}\right)^{r}.$$
Hence, with probability at least  $1-2\cdot\left(\frac{e\cdot D}{r}\right)^{r}$ there are no paths between $v$ and $S_r(v)$ in $G_T$ or, equivalently, $G_{v,T}\subset B_{r-1}(v)$.}
\end{proof}

\section{Proof of Theorem~\ref{th_local_requirement} }\label{appendix_local_requirement}

\new{Denote  by $F_v^d=(u_i)_{i=1,\ldots, d}$ the collection of $v$'s friends from the local learning requirement and by $F_{v,T}^{d}=\{u\in F_v^d\, :\, t_u\leq t_v\}$  those of them who arrive earlier than $v$.} 
	
	Consider the following deviation $a_v'$ of  $v$ from his equilibrium action $a_v$. The agent decides to rely on his friends $F_{v,T}^{d}$: if the majority of them played action $a\in\{0,1\}$, the agent $v$ repeats this action; in the case of a tie or $v$ being isolated, $a_v=s_v$, i.e., $v$ follows his signal. In equilibrium, no deviation is profitable and, hence, $\P(a_v=\theta)\geq\P(a_v'=\theta).$
		The probability of $a_v'$ being a mistake can be estimated using the Hoeffding inequality (see Footnote~\ref{footnote_Hoeffding}).
	
	{To apply the Hoeffding inequality, we need independence of actions $(a_u)_{u\in F_{v,T}^{d}}$ and an upper bound on the probability that $a_u\ne \theta$. To make the actions independent, we need conditioning on the appropriate family of events. The following lemmas describe the desired family of events and establish the bound.}
	
	\ed{Recall that $B_{r, G\setminus v}(u)$ denotes the $r$-neighborhood of $u$ in the network obtained from $G$ by eliminating~$v$.	
		 Denote by $W_v$ the event that the realized subnetwork of each  $u\in F_{v,T}^{d}$ is contained in $B_{r-1, G\setminus v}(u)$, i.e., $W_v=\cap_{u\in F_{v,T}^{d}} \{G_{u,T}\subset B_{r-1, G\setminus v}(u)\}$.
	Note that, given $W_v$, the realized subnetworks of $u\in F_{v,T}^d$ are disjoint.}
				
	\begin{lemma}\label{lm_independence}
		{Fix $\theta_0\in\{0,1\}$, $t\in[0,1]$, and a subset of friends $F\subset F_{v}^d$ of agent $v$. The  actions $(a_u)_{u\in F}$ are independent conditional on  $\theta=\theta_0$, the arrival time  $t_v=t$ of $v$,  the set of friends who arrived earlier $F_{v,T}^d=F$, and $W_v$.}
	\end{lemma}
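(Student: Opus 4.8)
The plan is to reduce the statement to a block-independence argument powered by the disjointness of the $r$-neighborhoods guaranteed by the local learning requirement. Write $F_v^d=\{u_1,\dots,u_d\}$ and, for each $i$, let $\mathcal{G}_i$ be the $\sigma$-algebra generated by \emph{all} the primitive randomness attached to the vertices of $B_{r,G\setminus v}(u_i)$ --- the private signals $s_w$, the arrival times $t_w$, and the internal randomization seeds of the mixed strategies, for $w\in B_{r,G\setminus v}(u_i)$. Since the local learning requirement makes the sets $B_{r,G\setminus v}(u_i)$ pairwise disjoint and $v$ belongs to none of them (they live in $G\setminus v$), and since conditional on $\theta=\theta_0$ the signals, arrival times, and seeds are mutually independent across vertices, the $\sigma$-algebras $\mathcal{G}_1,\dots,\mathcal{G}_d$ are mutually independent given $\theta=\theta_0$, and each is independent of $t_v$.

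First I would show that on the conditioning event every relevant object localizes to a single block. (i) The arrival-time comparisons defining $\{F_{v,T}^d=F\}$ split as $\bigcap_{u_i\in F}\{t_{u_i}\le t\}\cap\bigcap_{u_j\notin F}\{t_{u_j}>t\}$, and each factor depends only on $t_{u_i}$, hence is $\mathcal{G}_i$-measurable. (ii) The event $\{G_{u_i,T}\subset B_{r-1,G\setminus v}(u_i)\}$ is $\mathcal{G}_i$-measurable: it fails exactly when the realized subnetwork of $u_i$ reaches the sphere $S_r(u_i)=B_{r,G\setminus v}(u_i)\setminus B_{r-1,G\setminus v}(u_i)$, and any realized path from $u_i$ to $S_r(u_i)$ can be truncated at its first vertex on $S_r(u_i)$, after which the whole truncated path lies in $B_{r,G\setminus v}(u_i)$ and its presence in the realized network depends only on the arrival times of its vertices, all inside the neighborhood. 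Consequently, on $\{F_{v,T}^d=F\}$, the event $W_v=\bigcap_{u_i\in F}\{G_{u_i,T}\subset B_{r-1,G\setminus v}(u_i)\}$ factorizes across blocks. (iii) For $u_i\in F$ the action $a_{u_i}$ is, on $W_v$, $\mathcal{G}_i$-measurable: since $u_i$ arrives before $v$, the vertex $v$ is never reachable from $u_i$ in the realized network, so $G_{u_i,T}$ is unchanged whether computed in $G$ or in $G\setminus v$, and on $W_v$ it is contained in $B_{r-1,G\setminus v}(u_i)\subset B_{r,G\setminus v}(u_i)$; as $a_{u_i}$ is a function of the realized subnetwork together with the signals, arrival times, and seeds inside it, it is $\mathcal{G}_i$-measurable there.

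With these localizations in hand, the conclusion follows from the elementary fact that conditioning mutually independent $\sigma$-algebras on a product of events, one drawn from each, preserves independence. Concretely, conditioning on $\theta=\theta_0$ and $t_v=t$ merely fixes the parameters in the per-block events (recall $t_v$ is independent of every $\mathcal{G}_i$ and of $\theta$), after which $\{F_{v,T}^d=F\}\cap W_v=\bigcap_{i=1}^d B_i$ with $B_i\in\mathcal{G}_i$; since each $a_{u_i}$ with $u_i\in F$ is $\mathcal{G}_i$-measurable and the blocks remain independent under this product conditioning, the family $(a_{u_i})_{u_i\in F}$ is independent under the conditional law. The hard part will be the measurability claims (ii)--(iii): I must justify rigorously that reachability questions for $u_i$, and hence both the relevant factor of $W_v$ and the action $a_{u_i}$, are decided using only the data inside the disjoint neighborhood $B_{r,G\setminus v}(u_i)$, which is exactly where the truncation-of-escaping-paths argument and the observation that $u_i$ arrives before $v$ do the work.
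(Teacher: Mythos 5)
Your proof is correct and takes essentially the same route as the paper's: both arguments reduce the claim to the fact that, conditional on $\theta=\theta_0$ and $t_v=t$, the event $\{F_{v,T}^d=F\}\cap W_v$ factorizes into events determined by the primitive randomness of the pairwise disjoint neighborhoods $B_{r,G\setminus v}(u_i)$, so block independence survives the conditioning and the actions, being block-measurable on that event, inherit it. One caveat: your claim (ii) as stated is false unconditionally --- $G_{u_i,T}$ can escape $B_{r-1,G\setminus v}(u_i)$ through $v$ itself (e.g., via the direct edge $u_iv$ when $t_v<t_{u_i}$) without ever touching the sphere, so the factor of $W_v$ is \emph{not} $\mathcal{G}_i$-measurable outright; it only coincides with a $\mathcal{G}_i$-measurable event on $\{F_{v,T}^d=F\}$ (given $t_v=t$), which is exactly the observation you defer to (iii) and which the paper builds in from the start by describing the event as determined by the arrival times in $B_{r,G\setminus v}(u_i)\cup\{v\}$ together with the requirement that paths avoid $v$.
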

The second lemma ensures that $W_v$ is a high-probability event. \ed{It is a corollary of the \new{local nature of decisions property} (Lemma~\ref{lm_localization_appendix}).}
\begin{lemma}\label{lm_W_high_probability}
	The probability of $W_v$ conditional on arrival times of $v$ and all his friends from $F_{v}^d$	
	enjoys the following lower bound:
	$$\P\big(W_v\mid  t_v, \  (t_z)_{z\in F_{v}^d} \big)\geq 1-\psi,\ \
	\mbox{where} \ \ \psi=2d\cdot\left(\frac{e\cdot D}{r}\right)^{r}.$$
\end{lemma}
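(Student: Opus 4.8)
The plan is to reduce Lemma~\ref{lm_W_high_probability} to the single-vertex localization bound (Lemma~\ref{lm_localization_appendix}) applied separately to each friend $u\in F_{v,T}^d$, carried out in the reduced network $G\setminus v$, and then to close with a union bound over the at most $d$ friends. The first step is the key structural observation: for any friend $u$ that arrives \emph{earlier} than $v$, the vertex $v$ cannot belong to the realized subnetwork $G_{u,T}$. Indeed, every agent reachable from $u$ along a decreasing-arrival-time path arrives before $u$, hence before $v$; since $t_v>t_u$, the vertex $v$ is never reachable from $u$ in the realized network. Consequently the realized subnetwork $G_{u,T}$ computed in the a-priori network $G$ coincides with the realized subnetwork of $u$ computed in $G\setminus v$, and the event $\{G_{u,T}\subset B_{r-1,G\setminus v}(u)\}$ is exactly the localization event for $u$ in the network $G\setminus v$.

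Next I would apply Lemma~\ref{lm_localization_appendix}, but in the network $G\setminus v$ and to the vertex $u$, conditioning on the arrival time $t_u$. Because $u$ is one of the friends $u_1,\dots,u_d$ from the local learning requirement, all degrees inside its $r$-neighborhood $B_{r,G\setminus v}(u)$ are bounded by $D$; the localization lemma then yields
\[
\P\big(G_{u,T}\subset B_{r-1,G\setminus v}(u)\mid t_u\big)\geq 1-2\left(\frac{e\cdot D}{r}\right)^{r}.
\]
It remains to upgrade the conditioning from $t_u$ alone to the full vector $\big(t_v,\,(t_z)_{z\in F_v^d}\big)$. The localization event for $u$ depends only on the arrival times of agents lying inside $B_{r,G\setminus v}(u)$ (these are what determine whether a decreasing-time path escapes to the sphere). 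By the disjointness of the $r$-neighborhoods $B_{r,G\setminus v}(u_i)$ guaranteed by the local learning requirement, this set contains none of the other friends $u_j$ with $j\neq i$, and of course it excludes $v$. Hence the additional conditioning variables are independent of the arrival times governing $u$'s localization event (apart from $t_u$, which is already accounted for), so the per-friend bound above is unchanged after conditioning on $\big(t_v,(t_z)_{z\in F_v^d}\big)$.

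Finally, since $W_v=\bigcap_{u\in F_{v,T}^d}\{G_{u,T}\subset B_{r-1,G\setminus v}(u)\}$ and $|F_{v,T}^d|\leq |F_v^d|=d$, a union bound over the complementary events gives
\[
\P\big(W_v^{\,c}\mid t_v,\,(t_z)_{z\in F_v^d}\big)\leq d\cdot 2\left(\frac{e\cdot D}{r}\right)^{r}=\psi,
\]
which is the claimed bound $\P\big(W_v\mid t_v,\,(t_z)_{z\in F_v^d}\big)\geq 1-\psi$. The main obstacle, and the only genuinely delicate point, is the conditioning bookkeeping in the third step: one must verify that enlarging the conditioning from $t_u$ to the arrival times of $v$ and of all the friends does not inflate the single-vertex failure probability. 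This is precisely where the two hypotheses from the local learning requirement are used — the disjointness of the neighborhoods ensures the other friends' arrival times are irrelevant to $u$'s event, and the unreachability of $v$ from an earlier-arriving friend lets us pass freely between $G$ and $G\setminus v$.
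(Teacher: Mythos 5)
Your proposal is correct and follows essentially the same route as the paper's proof: both pass to the network $G\setminus v$ using the observation that a friend $u$ arriving before $v$ cannot have $v$ in his realized subnetwork, apply Lemma~\ref{lm_localization_appendix} to $u$ in $G\setminus v$ conditional on $t_u$, discharge the extra conditioning on $t_v$ and the other friends' arrival times via measurability of the localization event with respect to $(t_z)_{z\in B_{r,G\setminus v}(u)}$ together with disjointness of the neighborhoods and independence of arrival times, and finish with a union bound over the at most $d$ friends. The only difference is cosmetic (you pass to $G\setminus v$ before reducing the conditioning, the paper does it in the opposite order), so there is nothing to add.
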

The third lemma provides an upper bound on the probability of the wrong action.
\begin{lemma}\label{lm_upper_bound_mistake}	
		{For any subset $F$ of $F_{v}^d$, an agent $u\in F$, and any state-symmetric equilibrium, the conditional probability of the wrong action satisfies the following inequality:
			\begin{equation}\label{eq_conditional_mistake_bound}
			\P(a_u\ne \theta \mid \theta =\theta_0, \  t_v=t, \  F_{v,T}^d=F, \ W_v)\leq \frac{1}{1-\psi}\left(1-p + \frac{3}{t\cdot \sqrt{\deg(u)-1}}\right),
			\end{equation}
			where $\psi$ is from Lemma~\ref{lm_W_high_probability}.
		}
	\end{lemma}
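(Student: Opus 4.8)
The plan is to reduce the claimed bound, in three stages, to the elementary fact that in equilibrium no agent can do worse than following his own signal, and then to pay an approximation cost for the one obstruction: the conditioning event is invisible to $u$.

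First I would strip off the event $W_v$. Since $\{a_u\neq\theta\}\cap W_v\subset\{a_u\neq\theta\}$, writing the conditional probability as a ratio gives
$$\P(a_u\neq\theta\mid \theta=\theta_0,\,t_v=t,\,F_{v,T}^d=F,\,W_v)\leq \frac{\P(a_u\neq\theta\mid \theta=\theta_0,\,t_v=t,\,F_{v,T}^d=F)}{\P(W_v\mid \theta=\theta_0,\,t_v=t,\,F_{v,T}^d=F)}.$$
The event $W_v$ depends only on arrival times and the graph (not on $\theta$, signals, or actions), so the denominator does not see $\theta_0$; integrating Lemma~\ref{lm_W_high_probability} over the arrival times of the friends in $F_v^d$ that are compatible with $t_v=t$ and $F_{v,T}^d=F$ (the tower property transfers the bound from the finest conditioning to this coarser one) shows the denominator is at least $1-\psi$. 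This produces the factor $\frac{1}{1-\psi}$, and it remains to bound the numerator by $1-p+\frac{3}{t\sqrt{\deg(u)-1}}$.

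Second I would remove the conditioning on the state. The surviving conditioning event $\{t_v=t,\,F_{v,T}^d=F\}$ concerns only arrival times, which are independent of $\big(\theta,(s_v)_{v},(a_v)_{v}\big)$; hence the state-symmetry of $\sigma$ (invariance under the simultaneous flip of state, signals, and actions) yields $\P(a_u\neq\theta\mid \theta=0,\,t_v=t,\,F_{v,T}^d=F)=\P(a_u\neq\theta\mid \theta=1,\,t_v=t,\,F_{v,T}^d=F)$, so each equals the state-averaged quantity $\P(a_u\neq\theta\mid t_v=t,\,F_{v,T}^d=F)$. The whole task is now to bound this last probability. Third comes the core estimate. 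Its backbone is that for every event $A$ measurable with respect to $u$'s observation of his \emph{earlier friends} (but not his own signal $s_u$), one has $\P(a_u\neq\theta\mid A)\leq 1-p$: otherwise $u$ would strictly gain on $A$ by ignoring his neighbors and playing $s_u$, which is correct with probability exactly $p$ on such $A$ because the signal is conditionally independent of the friends' information given $\theta$. The obstruction is that $\{t_v=t,\,F_{v,T}^d=F\}$ is \emph{not} of this form: because $u\in F$ it pins $u$'s arrival time to $[0,t]$, information $u$ does not possess. I would approximate this arrival-time constraint by the $u$-measurable count $N_u=|F_{u,T}|$, which given $t_u=\tau$ is distributed as $\mathrm{Binomial}(\deg(u)-1,\tau)$ (with $v$ excluded, since $v$ arrives after $u$); thus $N_u/(\deg(u)-1)$ estimates $t_u$ with fluctuation of order $1/\sqrt{\deg(u)-1}$. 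Comparing the true law of $t_u$ under the conditioning, namely the uniform law on $[0,t]$ carrying the normalizer $1/\P(t_u<t)=1/t$, against this $u$-measurable surrogate and bounding the region where the two disagree by a binomial-concentration (Chebyshev) estimate is what produces the slack $\frac{3}{t\sqrt{\deg(u)-1}}$.

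The hard part is exactly this last transfer: the equilibrium bound $1-p$ is available only across $u$'s information partition, so it must be pushed across the gap between the genuine arrival-time conditioning and its best $u$-measurable proxy, and the explicit constant $3$ together with the $1/t$ normalization must be tracked through the concentration argument. This is also the step that forces $\deg(u)$ to be large, so that $\frac{1}{\sqrt{\deg(u)-1}}$ is small — precisely why the deviation in Theorem~\ref{th_local_requirement} only aggregates high-degree friends. The conditional independence that lets me treat $u$ in isolation, rather than entangled with the other friends $u_j$, is supplied by Lemma~\ref{lm_independence}: on $W_v$ the realized subnetworks of the members of $F_{v,T}^d$ are disjoint, so the perturbation to $u$'s mistake probability comes only through $u$'s own arrival time and not through the presence of the other bridged social circles.
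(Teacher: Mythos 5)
Your three stages are individually sensible, but their order creates a genuine gap. In stage 1 you strip $W_v$ from the conditioning, leaving as numerator $\P(a_u\neq\theta\mid \theta=\theta_0,\ t_v=t,\ F_{v,T}^d=F)$, which you then must bound by $1-p+\frac{3}{t\sqrt{\deg(u)-1}}$. But this conditioning event still pins down the arrival times of \emph{all} friends in $F_v^d$ relative to $t$ --- not only $u$'s --- and your proxy $N_u=|F_{u,T}|$ says nothing about those other constraints. To treat $u$ in isolation you appeal, at the very end, to Lemma~\ref{lm_independence}; but the conditional independence in that lemma holds only \emph{conditional on $W_v$}, which you removed in stage 1. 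Once $W_v$ is gone, the realized subnetworks of $u$ and the other friends $u_j$ need not be disjoint, and the events $\{t_{u_j}<t\}$ are correlated with $a_u$ through long arrival-time-decreasing paths; controlling that correlation is essentially a second localization argument, which your proposal does not supply. (Re-introducing $W_v$ to fix this would be circular: the quantity you would then face is exactly the left-hand side of the lemma.) The resulting extra error --- roughly of order $\psi/t$ --- involves a parameter unrelated to $\deg(u)$, so it cannot be absorbed into the claimed slack $\frac{3}{t\sqrt{\deg(u)-1}}$.

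The paper's proof avoids this trap by performing the decoupling \emph{first}, while $W_v$ is still in the conditioning: conditional on $\{\theta=\theta_0,\ t_v=t,\ F_{v,T}^d=F,\ W_v\}$, the distribution of $G_{u,T}$, hence of $a_u$, is determined by the arrival times inside $B_{r,G\setminus v}(u)$ alone, so the conditioning collapses to $\{t_u<t,\ v\notin F_{u,T},\ W_v\}$, an event involving only $u$ and $v$. Only then does it apply state symmetry, strip $W_v$ (this is where the factor $\frac{1}{1-\psi}$ appears), and run the arrival-time-proxy argument with $\hat t_u=|F_{u,T}|/(\deg(u)-1)$, the equilibrium bound $\P(a_u\neq\theta\mid A)\leq 1-p$ for $u$-measurable $A$, and binomial concentration --- all of which match your stage 3. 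So your core estimate is right; what fails is the reduction that must precede it. A smaller point: your stage-2 justification asserts that arrival times are independent of $\bigl(\theta,(s_v)_v,(a_v)_v\bigr)$, which is false for actions (they are functions of arrival times); what you actually need, and what suffices, is that the conditioning event is a function of arrival times alone and hence invariant under the state-flip map, so state symmetry of $\P$ applies.
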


	{The lemmas are proved below. With their help, we complete the proof of Theorem~\ref{th_local_requirement}.
		By the Hoeffding inequality and Lemmas~\ref{lm_independence} and~\ref{lm_upper_bound_mistake}, the probability of the wrong action $a_v'$ for the high state ($\theta=1$) is bounded as follows:
		\begin{align}
		\P\big(a_{v}'\ne \theta\mid \theta=1, \  t_v, \  F_{v,T}, \ W_v\big) &\leq\P\left(\sum_{u\in F_{v,T}^{d}} a_u \leq  \frac{\big|F_{v,T}^{d}\big|}{2}\ \Big|\ \theta =1, \  t_v, \  F_{v,T}, \ W_v\right)\leq\\
		& \leq \exp\left(-2\left(\frac{1}{2}- \frac{1}{1-\psi}\left(1-p + \frac{3}{t_v\cdot \sqrt{d-1}}\right)\right)^2\cdot \big|F_{v,T}^{d}\big| \right)=
\label{eq_Hoeffding_proof_th_local}		
		\\
		&=\exp\left(-\frac{1}{2}\frac{\left(2p-1-\psi- \frac{6}{t_v\cdot \sqrt{d-1}}\right)^2}{(1-\psi)^2}\cdot \big|F_{v,T}^{d}\big| \right),\label{eq_Hoefding}
		\end{align}
		where we used that $\deg(u)\geq d$ for $u\in F_v^d$.		 The bound~\eqref{eq_Hoefding} holds only if $t_v$ is not too small: to apply the Hoeffding inequality, the expression in the parentheses,  $\left(2p-1-\psi- \frac{6}{t\cdot \sqrt{d-1}}\right)$, must be non-negative (see the requirement $x\geq 0$ in  Footnote~\ref{footnote_Hoeffding}). We will use the bound~\eqref{eq_Hoefding} only for $t_v$ such that this expression is at least $\frac{2p-1-\psi}{2}$ or equivalently $t_v\geq \frac{12}{\sqrt{d-1}(2p-1-\psi)}$. For small $t_v$, we roughly bound the probability by $1$ and get the following inequality valid for all $t_v$:
		\begin{equation}\label{eq_Hoefding_for_all_t}
		\P\big(a_{v}'\ne \theta\mid \theta=1, \  t_v, \  F_{v,T}, \ W_v\big)\leq \scalebox{1.3}{\mbox{$\mathds{1}$}}_{\left\{t_v< \frac{12}{\sqrt{d-1}(2p-1-\psi)}\right\}}+\exp\left(-\frac{1}{8}\frac{\left(2p-1-\psi\right)^2}{(1-\psi)^2}\cdot \big|F_{v,T}^{d}\big| \right),
		\end{equation}
		where $\mathds{1}_A$ denotes the indicator of an event $A$.
	}
	
	{Let us dispense with conditioning on $\theta=1, \  t_v, \  F_{v,T},$ and  $W_v$. For $\theta=0$, one similarly derives the same bound~\eqref{eq_Hoefding_for_all_t} and,  hence, it holds unconditionally for $\theta$.
		Since $\P(\cdot)\leq \P(\,\cdot\mid W_v)+(1-\P(W_v))$, we dispense with conditioning on $W_v$ at the cost of increasing the upper bound by $\psi$.
		It remains to average the bound over $F_{v,T}^{d}$ and $t_v$. The size $|F_{v,T}^{d}|$  is uniformly distributed over $\big\{0,1,\ldots,d\big\}$ because it is the length of the prefix of $v$ in a random permutation of $F_{v}^{d}\cup \{v\}$. Thus,  averaging the second summand in~\eqref{eq_Hoefding_for_all_t} results in the geometric progression of length $d$.  Taking into account that $t_v$ is uniformly distributed on $[0,1]$ and bounding the geometric progression of length $d$ by the infinite one, we get
		\begin{equation}\label{eq_Hoeffding_non_simplified}
		\P(a_v'\ne\theta)\leq \psi+ \frac{12}{\sqrt{d-1}(2p-1-\psi)}+\frac{1}{d+1}\cdot \frac{1}{1-\exp\left(-\frac{1}{8}\frac{(2p-1-\psi)^2}{(1-\psi)^2}\right)}.
		\end{equation}
		This upper bound can be simplified if we note that $1-\exp(-x)\geq x \cdot \frac{1-\exp(-a)}{a}$ for $x\in[0,a]$. Since $\frac{2p-1-\psi}{1-\psi}\leq 1$, we obtain $$\frac{1}{1-\exp\left(-\frac{1}{8}\frac{(2p-1-\psi)^2}{(1-\psi)^2}\right)}\leq \frac{8 (1-\psi)^2}{\left(1-\exp\left(-\frac{1}{8}\right)\right)(2p-1-\psi)^2}\leq \frac{72}{(2p-1-\psi)^2},$$
		where in the second inequality we bounded $(1-\psi)$ by $1$ and used that $1-\exp\left(-\frac{1}{8}\right)\geq \frac{1}{9}$. Denoting by $b$ and $c$ the second and the third summand in~\eqref{eq_Hoeffding_non_simplified}, respectively,   we see that $c\leq \frac{1}{2} b^2$. For $b$ below~$1$, we have $b^2\leq b$, while for $b\geq 1$ the bound~\eqref{eq_Hoeffding_non_simplified} becomes trivial anyway. Therefore,
		$$\P(a_v'\ne\theta)\leq \psi+ \frac{18}{\sqrt{d-1}(2p-1-\psi)}.$$
		Taking into account that $\P(a_v=\theta)\geq \P(a_v'=\theta)=1-\P(a_v'\ne\theta)$ and substituting the expression for $\psi$ from Lemma~\ref{lm_W_high_probability}, we complete the proof of Theorem~\ref{th_local_requirement}.\qed
	}

\begin{proof}[Proof of Lemma~\ref{lm_independence}:]
To ensure conditional independence, we make use of the following property of the realized subnetwork 	\ed{$G_{u,T}=(V_{u,T},E_{u,T})$ of an agent $u\in V$.  Conditional on  $G_{u,T}$, his action $a_u$ is independent of $(s_z, a_z)_{z\in V\setminus V_{u,T}}$ and $(t_z)_{z\in V}$. In other words, $u$'s action is solely determined by his realized subnetwork and signals of agents from this subnetwork.}	
		This property has an important consequence. Consider a group of agents $U\subset V$ and an event determined by the collection of arrival times $\big\{T=(t_z)_{z\in V}\in \mathcal{T} \big\}$ for some $\mathcal{T}\subset [0,1]^V$. Then actions $(a_u)_{u\in U}$ are independent conditional on $\theta=\theta_0$ and  $\{T\in \mathcal{T}\}$ provided that all the realized subnetworks $(G_{u,T})_{u\in U}$ are independent disjoint random networks conditional on $\{T\in \mathcal{T}\}$.	
	
	With this general observation, we will demonstrate that actions $(a_u)_{u\in F}$ are conditionally independent given $ \theta =\theta_0, \  t_v=t, \  F_{v,T}^d=F$, and  $W_v$. 
	By the definition of $W_v$, the realized subnetwork of each $u_i\in F$ belongs to $B_{r,G\setminus v}(u_i)$ and thus the realized subnetworks  $(G_{u,T})_{u\in F}$ are disjoint. It remains to check their independence.
	
	\ed{Recall that the event $W_v$  has the form $\cap_{u\in F_{v,T}^d}\{G_{u,T}\subset B_{r-1,G\setminus v}(u)\}$. We now check that each event $\{G_{u,T}\subset B_{r-1,G\setminus v}(u)\}$ is determined by  arrival times $t_z$ of agents $z\in B_{r,G\setminus v}(u)\cup\{v\}$. Indeed, $G_{u,T}$ belongs to $B_{r-1,G\setminus v}(u)$ if and only if $u$ arrives earlier than $v$ and for any path  $(u=z_0,z_1,z_2,\ldots,z_{k})$ not passing through $v$ and connecting $u$ and the sphere $S_{r, G\setminus v}(u)= B_{r,G\setminus v}(u)\setminus B_{r-1,G\setminus v}(u)$ in the a-priori network $G$, 
	there is an index $i$ such that $t_{z_i}<t_{z_{i+1}}$. Consequently, we can rewrite the event  $\{G_{u,T}\subset B_{r-1,G\setminus v}(u)\}$ as $\{(t_z)_{z\in B_{r,G\setminus v}(u)}\subset \mathcal{T}_u(t_v)\}$, where $\mathcal{T}_u(t_v)$ is a certain subset of $[0,1]^{B_{r,G\setminus v}(u)}$ depending on the arrival time of $v$.}

{Conditioning on $F_{v,T}^d=F$, $t_v=t$,  and $W_v$ becomes equivalent to conditioning on the following family of events determined by arrival times:
 $\{(t_z)_{z\in B_{r,G\setminus v}(u)}\subset \mathcal{T}_u(t)\}$ for each $u\in F$, $t_u>t$ for $u\in F_v^d\setminus F$, and $t_v=t$. Arrival times $(t_z)_{z\in V}$ are unconditionally independent, and the condition restricts the values of disjoint subsets (here we use the fact that  $B_{r,G\setminus v}(u)$ are disjoint for $u\in F_v^d$). Thus, conditionally on $F_{v,T}^d=F$, $t_v=t$,  and $W_v$, the families of random variables $(t_z)_{z\in B_{r,G\setminus v}(u)}$ are independent across $u\in F$. 	The realized subnetwork $G_{u,T}$ of $u$ is determined by $(t_z)_{z\in B_{r,G\setminus v}(u)}$  and, hence, the networks $(G_{u,T})_{u\in F}$ are also conditionally independent, which implies the desired conditional independence of  actions $(a_u)_{u\in F}$.}
\end{proof}
\begin{proof}[Proof of Lemma~\ref{lm_W_high_probability}:]	
\ed{Recall that $W_v=\cap_{u\in F_{v,T}^d}\{G_{u,T}\subset B_{r-1,G\setminus v}(u)\}$ and consider one of these events. We pick an agent $u\in F_{v}^d$ with $t_u<t_v$ and demonstrate that
\begin{equation}\label{eq_proof_of_W_high_prob}
\P\big(G_{u,T}\subset B_{r-1,G\setminus v}(u)\mid t_v, (t_z)_{z\in F_v^d} \big)\geq 1-2\cdot \left(\frac{e\cdot D}{r}\right)^r.
\end{equation}
Note that the event $\{G_{u,T}\subset B_{r-1,G\setminus v}(u)\}$ is determined by the arrival times $t_z$ of $z\in B_{r, G\setminus v}(u)\cup \{v\}$ only (see the argument in the proof of Lemma~\ref{lm_independence}). Hence, by the independence of arrival times, we can simplify the condition 
$$\P\big(G_{u,T}\subset B_{r-1,G\setminus v}(u)\mid t_v, (t_z)_{z\in F_v^d} \big)= \P\big(G_{u,T}\subset B_{r-1,G\setminus v}(u)\mid t_v, t_u\big)$$
because $\big(F_v^d\setminus \{u\}\big)\cap B_{r, G\setminus v}(u)=\emptyset.$
Since we condition on $t_u$, the only information added by knowing $t_v$ is that \emph{at the time $u$ arrives, $v$ is absent} (recall that we assume that $t_u<t_v$). Thus
$$\P\big(G_{u,T}\subset B_{r-1,G\setminus v}(u)\mid t_v, t_u\big)= \P_{G\setminus v}\big(G_{u,T}\subset B_{r-1,G\setminus v}(u)\mid t_u\big),$$
where $\P_{G\setminus v}$ refers to the probability with respect to the arrival process in the network $G\setminus v$. By Lemma~\ref{lm_localization_appendix} applied to $u$ in the network  $G\setminus v$, we get
$$\P_{G\setminus v}\big(G_{u,T}\subset B_{r-1,G\setminus v}(u)\mid t_u\big)\geq 1-2\cdot \left(\frac{e\cdot D}{r}\right)^r$$
and we deduce~\eqref{eq_proof_of_W_high_prob}.
The desired inequality for the probability of $W_v$ follows from the union bound and~\eqref{eq_proof_of_W_high_prob}.}
\end{proof}
\begin{proof}[Proof of Lemma~\ref{lm_upper_bound_mistake}:]	
\ed{In the proof of Lemma~\ref{lm_independence} we observed that, once the realized  subnetwork $G_{u,T}$ of an agent  $u$ is given, his action $a_u$ is determined by the signals of agents from this subnetwork. We also saw that,
 conditional on $F_{v,T}^d=F$, $t_v=t$, and $W_v$, the realized subnetwork of $u\in F$ is determined by arrival times $(t_z)_{z\in B_{r,G\setminus v}(u)}$, and the condition  is equivalent to $\{(t_z)_{z\in B_{r,G\setminus v}(u)}\subset \mathcal{T}_u(t)\}$ (we use the notation introduced in that proof).  Therefore, the distribution of $G_{u,T}$ (and, hence, $a_u$) conditional on $\theta=\theta_0$, $F_{v,T}^d=F$, $t_v=t$, and $W_v$ is the same no matter what  other agents are  in $F$.}
This observation allows us to simplify the condition:
	$$\P(a_u\ne \theta \mid \theta =\theta_0, \  t_v=t, \  F_{v,T}^d=F, \ W_v)= \P(a_u\ne \theta \mid \theta =\theta_0, \  t_u< t, \  v\notin F_{u,T}, \  W_v).$$
	By the state symmetry of the equilibrium, the latter probability does not change if we eliminate conditioning on $\theta=\theta_0$.
	This probability can be bounded as follows:
	\begin{equation}\label{eq_wrong_action_alphs_singled_out}
	\P(a_u\ne \theta \mid   t_u< t, \  v\notin F_{u,T}, \  W_v)\leq \frac{\P(a_u\ne \theta \mid \  t_u< t, \  v\notin F_{u,T})}{1-\psi}
	\end{equation}
by the formula of total probability and the lower bound $\P(W_v\mid t_u, t_v)\geq 1-\psi$.

{	It remains to estimate the numerator. We use the following observation: for any event $A$ that belongs to the information partition of agent $u$, the conditional probability of the wrong action $\P(a_u\ne \theta \mid A)$ is at most $1-p$. Otherwise, the agent can profitably deviate from his equilibrium strategy by following his signal whenever $A$ occurs.
}
	
	{The event $A'=\{t_u< t, \  v\notin F_{u,T}\}$ is not known to $u$ since $u$ does not observe his arrival time. However, we can approximate the event $A'$ by the event $A=\{\hat{t}_u< t, \  v\notin F_{u,T}\}$, where $\hat{t}_u=\frac{|F_{u,T}|}{\deg(u)-1}$ is a proxy for $u$'s arrival time (for large-degree agents $\hat{t}_u\approx t_u$ by the law of large numbers). Agent $u$ knows when $A$ occurs and, hence, $\P(a_u\ne \theta\mid A)\leq 1-p$. The conditional probability with respect to $A'$ can be bounded  as follows:
	\begin{align*}
	\P(a_u\ne \theta \mid A')&\leq \frac{\P(a_u\ne \theta , \ A)+ \P(A'\setminus A)}{\P(A')}=\\
	&= \P(a_u\ne \theta \mid A)\cdot \frac{\P(A)}{\P(A')}+ \frac{\P(A'\setminus A)}{\P(A')}\leq (1-p)\cdot \frac{\P(A)}{\P(A')}+ \frac{\P(A'\setminus A)}{\P(A')}.
	\end{align*}
	Let us estimate all the probabilities in this expression. The probability $\P(A')$ can be computed explicitly as
	$$\P(A')=\P(t_u<t,\ t_v>t_u)=\int_0^t dt_u\int_{t_u}^1 dt_v =\int_0^t (1-t_u) \,dt_u=t-\frac{t^2}{2}.$$
	To estimate $\P(A)$, we note that conditionally on $t_u$ and $t_v>t_u$, the number of friends observed by $u$ has the binomial distribution with parameters  $\deg(u)-1$ and $t_u$ (there are $\deg(u)-1$ friends and each of them arrives before time $t_u$ independently with probability $t_u$). By the Hoeffding inequality, we get
	\begin{align*}
	\P\left(\frac{|F_{u,T}|}{\deg(u)-1}\leq t \mid t_u, \  v\notin F_{u,T}\right)&\leq \exp\left(  -2 (t_u-t)^2(\deg(u)-1)\right) \ \ \mbox{for $t\leq t_u$}\\
	\P\left(\frac{|F_{u,T}|}{\deg(u)-1}\geq t \mid t_u, \  v\notin F_{u,T}\right)&\leq \exp\left(  -2 (t_u-t)^2(\deg(u)-1)\right) \ \ \mbox{for $t\geq t_u.$}
	\end{align*}
	Now we are ready to estimate $\P(A)$:
	\begin{align*}
	\P(A)&=\P\left(\frac{|F_{u,T}|}{\deg(u)-1}\leq t, \ t_v>t_u \right)=\int_{0}^1 dt_v\int_0^{t_v} \P\left(\frac{|F_{u,T}|}{\deg(u)-1}\leq t \mid t_u, \ t_v>t_u\right) \,dt_u \leq \\
	&\leq \int_{0}^t dt_v\int_0^{t_v} 1\,dt_u+  \int_{t}^1 dt_v\int_0^{t} 1\,dt_u +  \int_{t}^1 dt_v\int_t^{t_v} \exp\left(  -2 (t_u-t)^2(\deg(u)-1)\right)\,dt_u \leq\\
	&\leq \frac{t^2}{2}+t(1-t)+\int_{-\infty}^0 \exp\left(  -2 s^2(\deg(u)-1)\right)\,ds= t-\frac{t^2}{2}+\sqrt{\frac{\pi}{8(\deg(u)-1)}}.
	\end{align*}
	Similarly,
	\begin{align*}
	\P(A'\setminus A)&=\P\left( t_u<t,\ t_v>t_u, \ \frac{|F_{u,T}|}{\deg(u)-1}> t\right)=\\
	&=\int_0^t dt_u \left(\int_{t_u}^1\, dt_v\right) \cdot  \P\left(\frac{|F_{u,T}|}{\deg(u)-1}>t \mid t_u, \  v\notin F_{u,T}\right)\leq\\ 
	&\leq \int_{-\infty}^0 \exp\left(  -2 s^2(\deg(u)-1)\right)\,ds=\sqrt{\frac{\pi}{8(\deg(u)-1)}}.
	\end{align*}
	Putting all the pieces together, we obtain
	\begin{align*}
	\P(a_u\ne\theta\mid A')&\leq (1-p) \left(1+\frac{1}{t-\frac{t^2}{2}}\sqrt{\frac{\pi}{8(\deg(u)-1)}}\right)+\frac{1}{t-\frac{t^2}{2}}\sqrt{\frac{\pi}{8(\deg(u)-1)}}\leq\\
	&\leq 1-p + \frac{3}{t\cdot \sqrt{\deg(u)-1}}.
	\end{align*}
	In the last inequality we took into account that $t-\frac{t^2}{2}\geq \frac{t}{2}$, $p\geq0$, and $\sqrt{2\pi}\leq3$.}
	
	{Substituting this expression in~\eqref{eq_wrong_action_alphs_singled_out} leads to the desired bound~\eqref{eq_conditional_mistake_bound}.}
\end{proof}

\section{Missed proofs for Section~\ref{sec_robust}}\label{appendix_missed_proofs_robust}

\begin{proof}[Proof of Lemma~\ref{lm_robust}]
	\ed{Consider the induced subnetwork  $G^{V'}$ of $G$ for some $V'\subset V$ with $|V'|=\big\lceil \alpha\cdot|V|\big\rceil$ and 	
	denote by $\deg'(v)$ the degree of an agent $v\in V'$ in $G^{V'}$.} Fix positive  $\gamma\leq \alpha$. Our goal is to bound the fraction of agents that have $\deg'(v)<\gamma \cdot D=\gamma\cdot \deg(v)$. Denote by $V^1$ the set of all such agents $v\in V'$ and by $V^2$, the set of eliminated agents $V\setminus V'$.
	Apply inequality~\eqref{eq_mixing} (see mixing lemma~\ref{lm_mixing}) to these $V^1$ and $V^2$. Since $E(V^1,V^2)>(1-\gamma)D\cdot|V^1|$ and $(1-\alpha)|V|-1< |V^2|\leq (1-\alpha)|V|$, we get
	$$(1-\gamma)D\cdot|V^1|-\frac{D}{|V|}\cdot |V^1|\cdot (1-\alpha)|V|\leq |\lambda_2|\sqrt{|V^1|\cdot(1-\alpha)|V|}.$$
	Dividing both sides by $\sqrt{|V^1|}$ and rearranging the terms we get
	$$(\alpha-\gamma)D \sqrt{|V_1|}\leq |\lambda_2|\sqrt{(1-\alpha)|V|}$$
	and, therefore,
	$$|V^1|\leq \frac{(1-\alpha)|\lambda_2|^2}{(\alpha-\gamma)^2 D^2}|V|\leq \frac{(1-\alpha)|\lambda_2|^2}{\alpha(\alpha-\gamma)^2 D^2}|V'|.$$
	Thus at least a fraction $\left(1-\frac{(1-\alpha)|\lambda_2|^2}{\alpha(\alpha-\gamma)^2 D^2}\right)$ of agents  $v\in V'$ has $\deg(v)\geq \gamma\cdot D$.
	
	{By Lemma~\ref{lm_friends_with_high_degree} contained below, at least $\left(1-\frac{2(1-\alpha)|\lambda_2|^2}{\alpha(\alpha-\gamma)^2 D^2}\right)|V'|$ agents have at least $\frac{\gamma}{2}\cdot D$ friends with degree $\frac{\gamma}{2}\cdot D$ or higher.}

	Applying Theorem~\ref{th_local_requirement} combined with Lemma~\ref{lm_girth_and_llr} to each agent in this set and estimating the chance of the correct action outside this set by zero, we obtain the following bound on the learning quality for any state-symmetric equilibrium $\sigma'$ in $G^{V'}$:
	{$$L_{\sigma'}\left(G^{V'}\right)\geq \left(1-\frac{2(1-\alpha)|\lambda_2|^2}{\alpha(\alpha-\gamma)^2 D^2}\right)\left(1-\ed{\delta\left(p,\frac{\gamma}{2} D,r,D\right)}\right),$$
	where  \ed{$r=\Big\lfloor\frac{g-3}{2}\Big\rfloor$}.}

	{Taking into account that $\delta(p,D,r,D)>\frac{18}{\sqrt{D}}$, we see that the expression in the first parenthesis is greater than $1-\frac{(1-\alpha)|\lambda_2|^2}{9 \alpha(\alpha-\gamma)^2 D^{\frac{3}{2}}}\cdot \delta(p,D,r,D)$.
		It is easy to check that $\delta(p,\beta D,r,D)\leq \frac{1}{\sqrt{\beta-\frac{1}{D}}}\delta(p,D, r, D)$ for any $\beta$ and, hence, the expression in the second parenthesis is at least $1-\frac{1}{\sqrt{\frac{\gamma}{2}-\frac{1}{D}} }\cdot \delta(p, D,r, D)$.}
	
	{Opening the brackets and dispensing with positive terms, we get
		$$L_{\sigma'}\left(G^{V'}\right)\geq 1-\left(\frac{(1-\alpha)|\lambda_2|^2}{9 \alpha(\alpha-\gamma)^2 D^{\frac{3}{2}}}+\frac{1}{\sqrt{\frac{\gamma}{2}-\frac{1}{D}} }\right)\cdot  \delta(p,D, r, D).$$
		Picking $\gamma=\frac{2\alpha}{3}$ and assuming that $\frac{\alpha}{3}-\frac{1}{D}\geq \frac{\alpha}{4}$ we obtain the desired bound~\eqref{eq_average_quality_subexpander}. In the complementary case of small $\alpha<\frac{12}{D}$ the bound~\eqref{eq_average_quality_subexpander}  trivializes since $\frac{2}{\sqrt{\alpha}}\delta(p,D, r, D)\geq \frac{\sqrt{D}}{\sqrt{3}}\cdot \frac{18}{\sqrt{D}}>1$.}
\end{proof}

{\begin{lemma}\label{lm_friends_with_high_degree}
		If in a network $G=(V,E)$ at least $(1-\beta)|V|$ agents have degree $D$ or higher; then,  at least $\left(1-2{\beta}\right)|V|$ agents have at least $\frac{D}{2}$ friends with degree at least $\frac{D}{2}$.
	\end{lemma}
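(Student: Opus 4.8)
The plan is to bound the number of \emph{bad} agents --- those having fewer than $\frac{D}{2}$ friends of degree at least $\frac{D}{2}$ --- and to show this number is below $2\beta|V|$. I would split the bad agents according to their own degree. Write $L=\{v:\deg(v)<D\}$ and $L'=\{v:\deg(v)<\frac{D}{2}\}$, so that $L'\subseteq L$. The hypothesis that at least $(1-\beta)|V|$ agents have degree $\geq D$ gives $|L|\leq\beta|V|$, and a fortiori $|L'|\leq\beta|V|$. The complementary set $H=V\setminus L$ consists of the agents with $\deg(v)\geq D$.

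First I would dispose of the bad agents lying in $L$: there are at most $|L|\leq\beta|V|$ of them, regardless of whether they are good or bad. It then remains to bound the bad agents inside $H$; call this set $B_H$.

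The key observation is that a high-degree agent $v\in H$ can be bad only if more than half of its (at least $D$) neighbors are ``small,'' i.e.\ lie in $L'$. Indeed, if $v$ had at most $\frac{D}{2}$ friends in $L'$, then, since $\deg(v)\geq D$, at least $\deg(v)-\frac{D}{2}\geq\frac{D}{2}$ of its friends would have degree $\geq\frac{D}{2}$, making $v$ good. Hence each $v\in B_H$ sends strictly more than $\frac{D}{2}$ edges into $L'$. These edges are distinct across $v\in B_H$ (each carries its own endpoint $v$), and every one of them is incident to $L'$; therefore their total number, which exceeds $|B_H|\cdot\frac{D}{2}$, is at most the number of edges touching $L'$, bounded by $\sum_{u\in L'}\deg(u)<|L'|\cdot\frac{D}{2}\leq\beta|V|\cdot\frac{D}{2}$. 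Cancelling the factor $\frac{D}{2}$ yields $|B_H|<\beta|V|$.

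Adding the two contributions, the total number of bad agents is strictly below $\beta|V|+\beta|V|=2\beta|V|$, so at least $(1-2\beta)|V|$ agents are good, as claimed. The argument is essentially a double-counting of the edges adjacent to the low-degree set, and I do not anticipate a genuine obstacle. The only point requiring care is keeping the two thresholds $D$ and $\frac{D}{2}$ straight: the hypothesis controls the set $\{\deg<D\}$, whereas both the ``bad'' condition and the edge count must be run against the smaller set $\{\deg<\frac{D}{2}\}$.
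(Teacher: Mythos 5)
Your proof is correct and is essentially the paper's own argument: both hinge on double-counting the edges between bad high-degree agents and the set $\{u:\deg(u)<\frac{D}{2}\}$, using that each bad high-degree agent sends at least $\frac{D}{2}$ edges into that set while each vertex there can absorb fewer than $\frac{D}{2}$ of them. The only cosmetic difference is that you run the count directly (splitting the bad agents into those of degree $<D$ and the rest), whereas the paper packages the same count as a proof by contradiction.
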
	
	\begin{proof}
		Denote by $V_{<D}$ the set of agents with less than $D$ friends and by $V_{<\frac{D}{2},\geq \frac{D}{2}}$ the set of agents with less than $\frac{D}{2}$ friends with degree $\frac{D}{2}$ or higher. We know  $|V_{<D}|<\beta|V|$ and want to prove that $\big|V_{<\frac{D}{2},\geq \frac{D}{2}}\big|< 2{\beta}|V|$. Assume by way of  contradiction that $\big|V_{<\frac{D}{2},\geq \frac{D}{2}}\big|\geq 2{\beta}|V|$. Therefore, $V'=V_{<\frac{D}{2},\geq \frac{D}{2}}\cap (V\setminus V_{<D})$ contains at least $\beta|V|$ agents. Each $v\in V'$ has at least $D$ friends and at least $D-\frac{D}{2}=\frac{D}{2}$ of them have degree less than $\frac{D}{2}$; denote the set of such low-degree friends by $F_{v,<\frac{D}{2}}$. The set   $V_{<D}$ contains the union of $F_{v,<\frac{D}{2}}$ over $v\in V'$. We obtain
		$$\frac{2}{{D}} \sum _{v\in V' } \big|F_{v,<\frac{D}{2}}\big|\leq |V_{<D}|,$$
		where the factor $\frac{2}{D}$ originates because no $u\in F_{v,<D}$ is counted more than $\deg(u)<\frac{D}{2}$ times. Using the bounds on  cardinalities of all the sets in this expression, we conclude that
		$$\frac{2}{D}\cdot \beta|V|\cdot \frac{D}{2}< \beta|V|\Longleftrightarrow 1<1.$$
		This contradiction completes the proof.
	\end{proof}	
}

\section{Randomized robustness}\label{app_randomized_robustness}
\ed{In Section~\ref{sec_robust}, we demonstrated existence of a-priori networks satisfying a very strong notion of adversarial robustness: the network aggregates information even if  a subset of agents is eliminated in an adversarial way.
Here we consider a weaker notion of \emph{robustness to random elimination} and show that \emph{any network} has this property. Namely, for networks with high learning quality, even if a substantial fraction of agents leaves the network, the remaining agents find a way to learn the state even though most paths of information diffusion (those involving eliminated agents) disappear. 

We will demonstrate the randomized robustness under an additional assumption about the information available to agents. \new{Recall that $G_{v,T}=(V_{v,T}, E_{v,T})$ denotes the realized subnetwork of an agent $v$; see the definition in Section~\ref{subsect_local_nature}.}
\begin{assumption}\label{assump_extra_observation}
	Each agent $v$,  in addition to his signal and actions of friends who arrived earlier than him, observes the set of agents $V_{v,T}$ of his realized subnetwork (without their actions and arrival times).
	 In other words, an agent knows the set of those who can possibly affect his action. The information set of $v$ is therefore $I_v=(s_v, (a_u)_{u\in F_{v,T}},V_{v,T})$.
\end{assumption}
\begin{remark}[The role of Assumption~\ref{assump_extra_observation}] We believe that this assumption plays a technical role and the randomized robustness must be ubiquitous without it as well; however, we were unable to get rid of it in our proof.
Assumption~\ref{assump_extra_observation} ensures the sequential structure of equilibrium. If $V_{v,T}=\emptyset$,  agent $v$ follows his signal. If $V_{v,T}$ is non-empty with $|V_{v,T}|=k$, the equilibrium strategy $\sigma_v\big(s_v, (a_u)_{u\in F_{v,T}}, V_{v,T}\big)$ is the optimal reply to $(\sigma_u)_{u\in V_{v,T}}$ with $|V_{u,T}|\leq k-1$ (since $V_{u,T}$ is a strict subset of $V_{v,T}$). This sequential structure implies that an agent gains no advantage from learning the set of agents who have not yet arrived, the property critical in the proof of Theorem~\ref{th_random_elimination}, the main result of this section.
\end{remark}

For a given  network $G$ and the probability $p$ of the correct signal, denote by  $L(G)$ the learning quality for the best equilibrium: $L(G)=\max_\sigma L_\sigma(G)$.} 
\begin{theorem}[Learning is robust to random elimination]\label{th_random_elimination}
	Under the Assumption~\ref{assump_extra_observation}, consider a network $G=(V,E)$ that has learning quality $L(G)= 1-\delta$ with some $\delta> 0$. 
	
	Fix $\alpha\in(0,1)$ and pick a subset $V'\subset V$ with $\big\lceil\alpha\cdot |V|\big\rceil$ agents 
	uniformly at random. Then the  learning quality for the induced subnetwork $G^{V'}$ enjoys the lower bound $L\big(G^{V'}\big)\geq 1-\sqrt{\frac{\delta}{\alpha}}$ with probability at least $1-\sqrt{\frac{\delta}{\alpha}}$ with respect to the choice of $V'$.
\end{theorem}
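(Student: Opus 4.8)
The plan is to set up a coupling between the random choice of the surviving set $V'$ and the arrival order in the original learning process on $G$, so that running the process on $G^{V'}$ becomes literally the ``early part'' of running it on $G$. Concretely, I would take the i.i.d.\ uniform arrival times $(t_v)_{v\in V}$ of the process on $G$, let $t^{(k)}$ denote the $k$-th smallest of them, set $k=\lceil \alpha\cdot|V|\rceil$, and define $V'=\{v:\ t_v\le t^{(k)}\}$ to be the agents with the $k$ earliest arrivals. By exchangeability of the arrival times, this $V'$ is uniformly distributed over all $k$-subsets, exactly as required. The crucial use of Assumption~\ref{assump_extra_observation} enters here: because each agent observes the vertex set $V_{v,T}$ of his realized subnetwork, equilibrium has a sequential structure in which an agent's optimal action never depends on information about who has \emph{not} yet arrived. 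Hence the strategies of the first $k$ arrivals form a valid equilibrium of $G^{V'}$, and their actions in $G$ coincide with their actions in this induced equilibrium. This yields the lower bound $L(G^{V'})\ge \frac{1}{|V'|}\E\big|\{v\in V':a_v=\theta\}\big|$, the learning quality of the induced equilibrium.

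Next I would compute the expectation over $V'$ of this induced learning quality. Writing $H$ for the average over all $k$-subsets $V'$ of the expected fraction of correct agents and exchanging the order of summation (summing over $v$ rather than over $V'$), the inner average becomes the probability that $v$ is correct conditional on $v$ belonging to the first $k$ arrivals:
\[
H=\frac{1}{|V|}\sum_{v\in V}\P\big(a_v=\theta\mid t_v\le t^{(k)}\big).
\]
I would then connect this to $L(G)=1-\delta$ via total probability. Since $\P(t_v\le t^{(k)})=k/|V|\approx\alpha$ and $\P(a_v=\theta\mid t_v>t^{(k)})\le 1$ trivially, we have $\P(a_v=\theta)\le \alpha\cdot\P(a_v=\theta\mid t_v\le t^{(k)})+(1-\alpha)$ for each $v$; averaging and using $\frac{1}{|V|}\sum_v\P(a_v=\theta)\ge 1-\delta$ rearranges to $H\ge 1-\delta/\alpha$.

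Finally I would upgrade this statement about the average into a high-probability statement by Markov's inequality applied to the nonnegative ``loss'' $1-\frac{1}{|V'|}\E^{V'}|\{v:a_v=\theta\}|$, whose mean is at most $\delta/\alpha$. Thus the fraction of subsets $V'$ on which the induced learning quality drops below $1-\delta'$ is at most $(\delta/\alpha)/\delta'$; choosing $\delta'=\sqrt{\delta/\alpha}$ balances the two error terms and gives that with probability at least $1-\sqrt{\delta/\alpha}$ the induced equilibrium, and hence a fortiori the best equilibrium, achieves $L(G^{V'})\ge 1-\sqrt{\delta/\alpha}$.

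The step I expect to be the real obstacle --- and the only place the argument is more than bookkeeping --- is the first one: verifying that the restriction of a $G$-equilibrium to the early arrivals is genuinely an equilibrium of $G^{V'}$ and that the coupling preserves actions. This is precisely why Assumption~\ref{assump_extra_observation} is imposed; without the observation of $V_{v,T}$, an agent in $G$ might condition on whether certain neighbors are ``unusually late'' (i.e.\ not yet present), which has no analogue in $G^{V'}$, so the induced profile could fail to be a best reply. Some care is also needed with the rounding $k=\lceil\alpha|V|\rceil$ versus the clean factor $\alpha$, but this only perturbs the bound negligibly, so I would first establish the case where $\alpha|V|$ is an integer and then absorb the discrepancy.
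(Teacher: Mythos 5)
Your proposal is correct and follows essentially the same argument as the paper: the identical coupling (taking $V'$ to be the $\lceil\alpha|V|\rceil$ earliest arrivals), the same use of Assumption~\ref{assump_extra_observation} to show that the restricted strategy profile is an equilibrium of $G^{V'}$, the same total-probability bound giving an average induced learning quality of at least $1-\delta/\alpha$, and the same Markov-inequality step with threshold $\sqrt{\delta/\alpha}$. The only cosmetic difference is your handling of the rounding, which the paper dispatches directly via $\P(v\in V')=\lceil\alpha|V|\rceil/|V|\geq\alpha$, so no separate integrality reduction is needed.
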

%
\begin{proof}[Proof of Theorem~\ref{th_random_elimination}.]
	The argument is based on a coupling of the learning process in the original network $G$ and the selection of the random subnetwork $G^{V'}$. Fix an equilibrium $\sigma=(\sigma_v)_{v\in V}$  maximizing $L_\sigma(G)$ and pick a subset $V'$ to be the set of $\big\lceil \alpha\cdot |V|\big\rceil$ earliest arrivals.
	
	For $v\in V'$, the equilibrium strategy $\sigma_v$ in the original network $G$ can be used as a strategy in $G^{V'}$. The resulting family of strategies $(\sigma_v)_{v\in V'}$ constitutes an equilibrium in $G^{V'}$, which we denote by $\sigma^{V'}$. Note that here we use Assumption\footnote{In the game played over $V'$, all the present agents know that agents from $V\setminus V'$ are absent. In particular, without Assumption~\ref{assump_extra_observation} an agent $v$ in $V'$ would know more about his predecessors than the same agent in the game played over $V$. Thus, a best reply in $V$ may no longer be a best reply in the game restricted to $V'$, even if all other agents maintained their strategy.}~\ref{assump_extra_observation}.
	
	
	{The constructed coupling allows us to link the learning quality for $G$ and for $G^{V'}$ under equilibria $\sigma$ and $\sigma^{V'}$, respectively.
		By the formula of total probability, the learning quality for $G$ can be represented as
		$$L(G)=\frac{1}{|V|}\sum_{v\in V} \Big(\P(a_v=\theta\mid v\in V')\cdot \P(v\in V')+ \P(a_v=\theta\mid v\notin V')\cdot \P(v\notin V')\Big).$$
		Using a rough estimate $\P(a_v=\theta\mid v\notin V')\leq 1$ on the probability of the  correct action outside $V'$ and taking into account that $\P(v\in V')$  is bounded from below by $\alpha$, we get the following inequality:
		$$L_\sigma(G)\leq \alpha\cdot  \E_{V'} L_{\sigma^{V'}}\big(G^{V'}\big)+(1-\alpha),$$
		where $\E_{V'}$ denotes expectation with respect to the choice of $V'$. Since the left-hand side is equal to  $1-\delta$, we obtain
		$$1-\E_{V'} L_{\sigma^{V'}}\big(G^{V'}\big) \leq \frac{\delta}{\alpha}.$$
		Application of the Markov inequality completes the proof.}
\end{proof}

\end{document}